\newcommand{\dataflow}{dataflow}
\newcommand{\dc}{\mbox{\sc DC}}
\newcommand{\ceal}{\mbox{\sc CEAL}}
\newcommand{\qt}{\mbox{\tt Qt}}
\newcommand{\qtfs}{\mbox{\tt Qt} {\tt 4.6}}
\newcommand{\trans}{\,\Rightarrow}
\newcommand{\normal}{\mbox{\tt{normal}}}
\newcommand{\reactive}{\mbox{\tt{reactive}}}
\newcommand{\self}{\mbox{$c_{self}$}}
\newcommand{\pick}{\mbox{\tt{pick}}}
\newcommand{\newcons}{\mbox{\tt{newcons}}}
\newcommand{\delcons}{\mbox{\tt{delcons}}}
\newcommand{\batomic}{\mbox{\tt{begin\_at}}}
\newcommand{\eatomic}{\mbox{\tt{end\_at}}}
\newcommand{\while}{\mbox{\sc While}}
\newcommand{\dwhile}{\mbox{\sc DWhile}}
\newcommand{\wwhile}{\mbox{\bf while}}
\newcommand{\wdo}{\mbox{\bf do}}
\newcommand{\wskip}{\mbox{\bf skip}}
\newcommand{\wif}{\mbox{\bf if}}
\newcommand{\wthen}{\mbox{\bf then}}
\newcommand{\welse}{\mbox{\bf else}}
\newcommand{\emphasize}{\sc\small}
\newtheorem{theorem}{Theorem}
\newtheorem{lemma}{Lemma}
\newtheorem{definition}{Definition}
\def\hide#1{{}}
\newenvironment{proof}
{\noindent   {\sc Proof.}}{\hspace*{\fill}$\Box$\par\vspace{2mm}}
\newenvironment{sketch}
{\noindent   {\sc Proof (sketch).}}{\hspace*{\fill}$\Box$\par\vspace{2mm}}
\newcommand{\fullpaper}[1]{}
\newcommand{\wmax}[1]{}
\begin{document}

\conferenceinfo{WXYZ '05}{date, City.} 
\copyrightyear{2005} 
\copyrightdata{[to be supplied]} 

\title{Reactive Imperative Programming with Dataflow Constraints}

\authorinfo{Camil Demetrescu}
           {{\small Dept. of Computer and System Sciences\\Sapienza University of Rome}}
           {demetres@dis.uniroma1.it}
\authorinfo{Irene Finocchi}
           {{\small Dept. of Computer Science\\Sapienza University of Rome}}
           {finocchi@di.uniroma1.it}
\authorinfo{Andrea Ribichini}
           {{\small Dept. of Computer and System Sciences\\Sapienza University of Rome}}
           {ribichini@dis.uniroma1.it}

\maketitle

\begin{abstract}
\begin{small}
Dataflow languages provide natural support for specifying constraints between objects in dynamic applications, where programs need to react efficiently to changes of their environment. Researchers have long investigated how to take advantage of dataflow constraints by embedding them into procedural languages. Previous mixed imperative/dataflow systems, however, require syntactic extensions or libraries of {\em ad hoc} data types for binding the imperative program to the dataflow solver. In this paper we propose a novel approach that smoothly combines the two paradigms without placing undue burden on the programmer. 

In our framework, programmers can define ordinary commands of the host imperative language that enforce constraints between objects stored in special memory locations designated as ``reactive''. Differently from previous approaches, reactive objects can be of any legal type in the host language, including primitive data types, pointers, arrays, and structures. Commands defining constraints are automatically re-executed every time their input memory locations change, letting a program behave like a spreadsheet where the values of some variables depend upon the values of other variables. The constraint solving mechanism is handled transparently by altering the semantics of elementary operations of the host language for reading and modifying objects. We provide a formal semantics and describe a concrete embodiment of our technique into C/C++, showing how to implement it efficiently in conventional platforms using off-the-shelf compilers. We discuss common coding idioms and relevant applications to reactive scenarios, including incremental computation, observer design pattern, and data structure repair. The performance of our implementation is compared to {\em ad hoc} problem-specific change propagation algorithms, as well as to language-centric approaches such as self-adjusting computation and subject/observer communication mechanisms, showing that the proposed approach is efficient in practice.
\end{small}
\end{abstract}

\begin{small}

\category{D.3.3}{Programming Languages} {Language Constructs and Features} [Constraints] 

\terms Algorithms, design, experimentation, languages.

\keywords Reactive programming, \dataflow\ programming, imperative programming, constraint solving, incremental computation, observer design pattern, data structure repair.

\end{small}

%====================================================================
\section{Introduction}
\label{se:intro}

A one-way, dataflow constraint is an equation of the form $y=f(x_1,\ldots,x_n)$ in which the formula on the right side is automatically re-evaluated and assigned to the variable $y$ whenever any variable $x_i$ changes. If $y$ is modified from outside the constraint, the equation is left temporarily unsatisfied, hence the attribute ``one-way''. Dataflow constraints are recognized as a powerful programming methodology in a variety of contexts because of their versatility and simplicity~\cite{Zanden01}. The most widespread application of dataflow constraints is perhaps embodied by spreadsheets~\cite{AbrahamBE08, kay84}. In a spreadsheet, the user can specify a cell formula that depends on other cells: when any of those cells is updated, the value of the first cell is automatically recalculated. Rules in a {\tt makefile} are another example of dataflow constraints: a rule sets up a dependency between a target file and a list of input files, and provides shell commands for rebuilding the target from the input files. When the makefile is run, if any input file in a rule is discovered to be newer than the target, then the target is rebuilt. The dataflow principle can be also applied to software development and execution, where the role of a cell/file is replaced by a program variable. This approach has been widely explored in the context of interactive applications, multimedia animation, and real-time systems~\cite{Amulet97,WanH00,signal86, lustre87}.

Since the values of program variables are automatically recalculated upon changes of other values, the dataflow computational model is very different from the standard imperative model, in which the memory store is changed explicitly by the program via memory assignments. The execution flow of applications running on top of a dataflow environment is indeed data-driven, rather than control-driven, providing a natural ground for automatic change propagation  in all scenarios where programs need to react to modifications of their environment. Implementations of the dataflow principle share some common issues with self-adjusting computation, in which programs respond to input changes by updating automatically their output~\cite{AcarBBT06,HammerAC09,DBLP:conf/pldi/AcarBLTT10}.

Differently from purely declarative constraints~\cite{apt03principles}, data- flow constraints are expressed by means of (imperative) methods whose execution makes a relation satisfied. This programming style is intuitive and readily accessible to a broad range of developers~\cite{Zanden01}, since the ability to smoothly combine different paradigms in a unified framework makes it possible to take advantage of different programming styles in the context of the same application. The problem of integrating imperative and dataflow programming has already been the focus of previous work in the context of specific application domains~\cite{B81, myers90, Amulet97, Zanden01, 1640091}. Previous mixed imperative/dataflow systems are based on libraries of {\em ad hoc} data types and functions for representing constraint variables and for binding the imperative program to the constraint solver. One drawback of these approaches is that constraint variables can only be of special data types provided by the runtime library, causing loss of flexibility and placing undue burden on the programmer. A natural question is whether the dataflow model can be made to work with general-purpose, imperative languages, such as C, without adding syntactic extensions and {\em ad hoc} data types. In this paper we affirmatively answer this question.

\paragraph{Our Contributions.} We present a general-purpose framework where programmers can specify generic one-way constraints between objects of arbitrary types stored in {\em reactive} memory locations. Constraints are written as ordinary commands of the host imperative language and can be added and removed dynamically at run time. Since they can change multiple objects within the same execution, they are  multi-output. The main feature of a constraint is its sensitivity to modifications of reactive objects: a constraint is automatically re-evaluated whenever any of the reactive locations it depends on is changed, either by the imperative program, or by another constraint. A distinguishing feature of our approach is that the whole constraint solving mechanism is handled transparently by altering the semantics of elementary operations of the host imperative language for reading and modifying objects. No syntax extensions are required and no new primitives are needed except for adding/removing constraints, allocating/deallocating reactive memory locations, and controlling the granularity of solver activations. Differently from previous approaches, programmers are not forced to use any special data types provided by the language extension, and can resort to the full range of conventional constructs for accessing and manipulating objects offered by the host language. In addition, our framework supports all the other features that have been recognized to be important in the design of dataflow constraint systems~\cite{Zanden01}, including:

\begin{description}
\item {\em Arbitrary code}: constraints consist of arbitrary code that is legal in the underlying imperative language, thus including loops, conditionals, function calls, and recursion.

\item {\em Address dereferencing}: constraints are able to reference variables indirectly via pointers.

\item {\em Automatic dependency detection}: constraints automatically detect the reactive memory locations they depend on during their evaluation, so there is no need for programmers to explicitly declare dependencies, which are also allowed to vary over time.
\end{description}

\noindent We embodied these principles into an extension of C/C++ that we called \dc. Our extension has exactly the same syntax as C/C++, but a different semantics.  Our main contributions are reflected in the organization of the paper and can be summarized as follows:

\begin{itemize}
\item In Section~\ref{se:model} we abstract our mechanism showing how to extend an elementary imperative language to support one-way dataflow constraints using reactive memory. We distinguish between three main execution modes: normal, constraint, and scheduling. We formally describe our mixed imperative/dataflow computational model by defining the interactions between these modes and providing a formal semantics of our mechanism. 

\item In Section~\ref{ss:properties} we discuss convergence of the dataflow constraint solver by modeling the computation as an iterative process that aims at finding a common fixpoint for the current set of constraints. We identify general constraint properties that let the solver terminate and converge to a common fixpoint independently of the scheduling strategy. This provides a sound unifying framework for solving both acyclic and cyclic constraint systems. 

\item In Section~\ref{se:C} we describe the concrete embodiment of our technique into C/C++, introducing the main features of \dc.  \dc\ has exactly the same syntax as C/C++, but operations that read or modify objects have a different semantics. All other primitives, including creating and deleting constraints and allocating and deallocating reactive memory blocks, are provided as runtime library functions.

\item In Section~\ref{ss:examples} we give a variety of elementary and advanced programming examples and discuss how \dc\ can improve C/C++ programmability in three relevant application scenarios: incremental computation, implementation of  the observer software design pattern, and  data structure checking and repair. To the best of our knowledge, these applications have not been explored before in the context of dataflow programming.

\item In Section~\ref{se:implementation} we describe how \dc\ can be implemented using off-the-shelf compilers on conventional platforms via a combination of runtime libraries, hardware/operating system support, and dynamic code patching, without requiring any source code preprocessing.

\item In Section~\ref{se:experiments} we perform an extensive experimental analysis of  \dc\ in a variety of settings, showing that our implementation is effective in practice. We consider both interactive applications and computationally demanding benchmarks that manipulate lists, grids, trees, matrices, and graphs. We assess the performances of \dc\ against conventional C-based implementations as well as against competitors that can quickly react to input changes, i.e., {\em ad hoc} dynamic algorithms, incremental solutions realized in \ceal~\cite{HammerAC09} (a state-of-the-art C-based framework for self-adjusting computation), and \qt's signal-slot implementation of the subject/observer communication mechanism~\cite{QtBook06}.

\end{itemize}

\noindent Related work is discussed in Section~\ref{se:relatedwork} and directions for future research are sketched in Section~\ref{se:conclusions}. 

%====================================================================
\section{Abstract Model}
\label{se:model}

To describe our approach, we consider an elementary imperative language and we show how to extend it to support one-way \dataflow\ constraints. We start from \while~\cite{DBLP:books/crc/CRCcompiler2002/PrasadA02}, an extremely simple language of commands including a sub-language of expressions. Although \while\ does not support many fundamental features of concrete imperative languages (including declarations, procedures, dynamic memory allocation, type checking, etc.), it provides all the building blocks for a formal description of our mechanism, abstracting away details irrelevant for our purposes. 
We discuss how to modify the semantics of \while\ to integrate a {\em \dataflow\ constraint solver}. We call the extended language \dwhile. \dwhile\ is identical to \while\ except for a different semantics and additional primitives for adding/deleting constraints dynamically and for controlling the granularity of solver activations. As we will see in Section~\ref{se:C}, these primitives can be supported in procedural languages as runtime library functions.

%-----------------------------------------
\subsection{The \dwhile\ Language}
\label{ss:dwhile-language}

The abstract syntax of \dwhile\ is shown in Figure~\ref{fi:dwhile-syntax}. The 
\begin{wrapfigure}{r}{3.8cm}
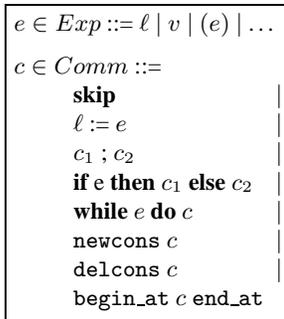

\begin{boxedminipage}{3.8cm}
\begin{small}
\noindent$e\in Exp$ ::= $\ell$ $|$ $v$ $|$ $(e)$ $|$ \ldots
\begin{tabbing}
xxxxx\=xxxxxxxxxxxxxxxxx\= \kill
$c\in Comm$ ::= \\ 
\> \wskip\ \> $|$ \\
\> $\ell$ := $e$ \> $|$  \\
\> $c_1$ ; $c_2$ \> $|$\\ 
\> \wif\ e \wthen\ $c_1$ \welse\ $c_2$ \> $|$\\
\> \wwhile\ $e$ \wdo\ $c$ \> $|$\\ 
\> \newcons\ $c$ \> $|$\\
\> \delcons\ $c$ \> $|$\\
\> \batomic\ $c$ \eatomic\ \>
\end{tabbing}
\end{small}
\end{boxedminipage}
\vspace{-2mm}
\nocaptionrule\caption{Abstract syntax of \dwhile.}
\label{fi:dwhile-syntax}
\end{wrapfigure}
language distinguishes between commands and expressions. We use 
$c$, $c_1$, $c_2$ as 
meta-variables ranging over the set of commands $Comm$, and $e$, $e_1$, $e_2$ as meta-variables 
ranging over the set of expressions $Exp$. Canonical forms
of expressions are either storage locations $\ell\in Loc$, or storable values $v$ over some arbitrary domain $Val$. Expressions can be also obtained by applying  to sub-expressions any primitive operations defined over domain $Val$ (e.g., plus, minus, etc.). Commands include:

\begin{itemize}

\item Assignments of values to storage locations ($\ell$ := $e$). These commands are the basic state transformers.

\item Constructs for sequencing, conditional execution, and iteration, with the usual meaning.

\item Two new primitives, $\newcons$ and $\delcons$, for adding and deleting constraints dynamically. Notice that a constraint in \dwhile\ is just an ordinary command. 

\item An atomic block construct, {\tt begin\_at c \tt end\_at}, that executes a command {\tt c} atomically so that any constraint evaluation is deferred until the end of the block. This offers fine-grained control over solver activations.

\end{itemize}

\noindent In Section~\ref{se:C} we will show a direct application of the concepts developed in this section to the C/C++ programming languages.

%-----------------------------------------
\subsection{Memory Model and Execution Modes}
\label{ss:modes}

Our approach hinges upon two key notions: {\em reactive memory locations} and {\em constraints}. Reactive memory can be read and written just like ordinary memory. However, differently from ordinary memory:

\begin{enumerate}

\item If a constraint $c$ reads a reactive memory location $\ell$ during its execution, a dependency $(\ell,c)$ of $c$ from $\ell$ is logged in a set $D$ of dependencies.

\item If the value stored in a reactive memory location $\ell$ is changed, all constraints depending on $\ell$ (i.e., all constraints $c$ such that $(\ell,c)\in D$) are automatically re-executed.
\end{enumerate}

\noindent Point 2 states that constraints are sensitive to modifications of the contents of the reactive memory. Point 1 shows how to maintain dynamically the set $D$ of dependencies needed to trigger the appropriate constraints upon changes of reactive memory locations. We remark that re-evaluating a constraint $c$ may completely change the set of its dependencies: prior to re-execution, all the old dependencies $(-,c)\in D$ are discarded, and new dependencies are logged in $D$ during the re-evaluation of $c$.

\begin{figure}
\centerline{\includegraphics[width=8.3cm]{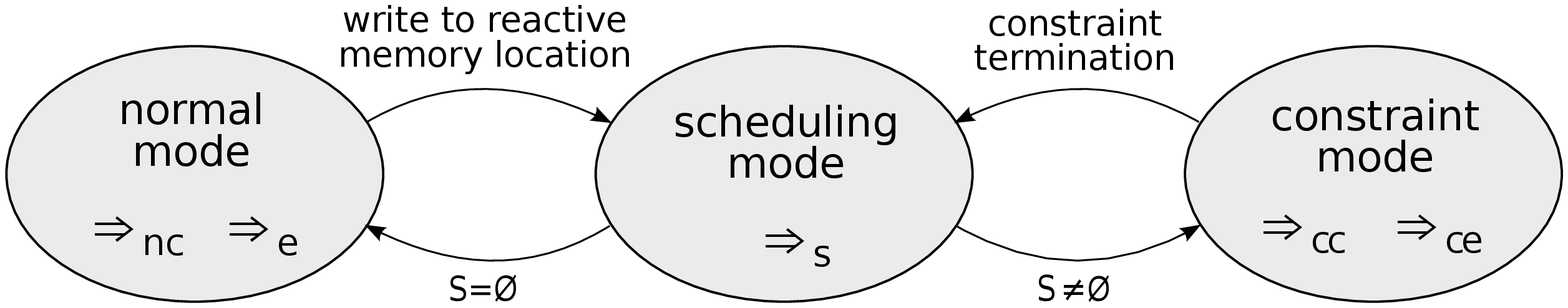}}
\bigskip
\caption{Transitions between different execution modes.}
\label{fi:automa}
\end{figure}

As shown in Figure~\ref{fi:automa}, at any point in time the execution can be in one of three modes: {\em normal execution}, {\em constraint execution}, or {\em scheduling}. As we will see more formally later in this section, different instructions (such as reading a reactive memory location or assigning it with a value) may have different semantics depending on the current execution mode.

We assume {\em eager} constraint evaluation, i.e., out-of-date constraints are brought up-to-date as soon as possible. This choice is better suited to our framework and, as previous experience has shown, lazy and eager evaluators typically deliver comparable performance in practice~\cite{Zanden01}.
Eager evaluation is achieved as follows. A scheduler maintains a data structure $S$ containing constraints to be first executed or re-evaluated. As an invariant property, $S$ is guaranteed to be empty during normal execution. As soon as a reactive memory location $\ell$ is written, the scheduler queries the set $D$ of dependencies and adds to $S$ all the constraints depending on $\ell$. These constraints are then run one-by-one in constraint execution mode, and new constraints may be added to $S$ throughout this process. Whenever $S$ becomes empty, normal execution is resumed. 

An exception to eager evaluation is related to atomic blocks. The execution of an atomic block {\tt c} is regarded as an uninterruptible operation: new constraints created during the evaluation of {\tt c} are just added to $S$. When {\tt c} terminates, for each reactive memory location $\ell$ whose value has changed, all the constraints depending on $\ell$ are also added to $S$, and the solver is eventually activated. Constraint executions are uninterruptible as well.

We remark that any scheduling mechanism may be used for selecting from $S$ the next constraint to be evaluated: in this abstract model we rely on a function $\pick$ that implements any appropriate scheduling strategy. 

\begin{figure*}
\begin{boxedminipage}{\linewidth}
\begin{small}
\begin{tabular}{p{10.8cm}p{5.8cm}}

$\trans\,\subseteq ({\cal R}\times \Sigma\times Comm)\times \Sigma$
\vspace{-3mm}
&
${\langle \rho,\sigma, c\rangle\trans \langle \sigma'\rangle}$
\vspace{-3mm}

\\

$\trans_{ce}\,\subseteq ({\cal R}\times\Sigma\times Cons\times Dep\times Exp)\times (Dep\times Val)$
\vspace{-3mm}
&
${\langle \rho,\sigma, c_{self}, D, e\rangle\trans_{ce} \langle D', v\rangle}$
\vspace{-3mm}

\\

$\trans_{nc}\,\subseteq ({\cal R}\times Bool\times\Sigma\times Dep\times 2^{Cons}\times Comm)\times (\Sigma\times Dep\times 2^{Cons})$
\vspace{-3mm}
&
${\langle \rho,a,\sigma, D, S, c\rangle\trans_{nc} \langle \sigma', D', S'\rangle}$
\vspace{-3mm}

\\

$\trans_{cc}\,\subseteq ({\cal R}\times\Sigma\times Dep\times 2^{Cons}\times Cons\times Comm)\times (\Sigma\times Dep\times 2^{Cons})$
\vspace{-3mm}
&
${\langle \rho,\sigma, D, S, c_{self}, c\rangle\trans_{cc} \langle \sigma', D', S'\rangle}$
\vspace{-3mm}

\\

$\trans_{s}\,\subseteq ({\cal R}\times \Sigma\times Dep\times 2^{Cons})\times (\Sigma\times Dep)$
\vspace{-3mm}
&
${\langle \rho,\sigma, D, S\rangle\trans_{s} \langle \sigma', D'\rangle}$
\vspace{-3mm}

\end{tabular}
\end{small}
\end{boxedminipage}
\vspace{0mm}
\nocaptionrule\caption{Transition relations for \dwhile\ program evaluation ($\trans$), expression evaluation in constraint mode ($\trans_{ce}$), command execution in normal mode ($\trans_{nc}$), command execution in constraint mode ($\trans_{cc}$), and constraint solver execution in scheduling mode ($\trans_{s}$).}
\label{fi:transitions}
\end{figure*}

%-----------------------------------------
\subsection{Configurations}
\label{ss:configurations}

A configuration of our system is a six-tuple 
\vspace{-1mm}
$$(\rho, a, \sigma, D, S, \self)\in {\cal R}\times Bool\times\Sigma\times Dep\times 2^{Cons}\times Cons\vspace{-1mm}$$
where:
\begin{itemize}

\item ${\cal R} = \{\rho: Loc \rightarrow \{\,\normal,\reactive\,\} \}$ is a set of store attributes, i.e., Boolean functions specifying which memory locations are reactive.

\item $Bool=\{\mbox{\tt true}, \mbox{\tt false}\}$ is the set of Boolean values.

\item $\Sigma = \{\sigma: Loc \rightarrow Val \}$ is a set of stores mapping storage locations to storable values. 

\item $Cons$ is the set of constraints and $2^{Cons}$ denotes its power set. A constraint can be any command in \dwhile, i.e., $Cons=Comm$. We use different names for the sake of clarity.

%--------- semantics -----------
\begin{figure*}
\begin{boxedminipage}{\linewidth}
\begin{small}

\begin{tabular}{p{8.3cm}|p{8.3cm}}

% ---------- row 1 ------------
\begin{minipage}{8.3cm}
\begin{center}

\begin{math}
\begin{array}{ll}
\inferrule % EVAL
{\rho, a, S\vdash \langle \sigma, D, c\rangle\trans_{nc} \langle \sigma', D'\rangle}
{\rho \vdash \langle \sigma, c\rangle \trans \sigma'}
&
\raisebox{1pt}
{
\begin{math}
\mbox{where: } 
\left\{ 
\begin{array}{l}
a=\mbox{\tt false}\\
D=\emptyset \\
S=\emptyset \\
\end{array} 
\right. 
\end{math}
} 
\end{array}
\end{math}

\end{center}
\end{minipage}

&

\begin{minipage}{8.3cm}
\begin{center}
\begin{math}
\inferrule % ASGN-C
{\rho,\sigma,c_{self}\vdash\langle D, e\rangle\trans_{ce} \langle D',v\rangle ~~~~~~ \sigma'=\sigma|_{\ell\mapsto v}}
{\rho, S, c_{self}\vdash\langle\sigma, D, \ell:=e\rangle\trans_{cc} \langle \sigma', D'\rangle}
\end{math}
\end{center}
\end{minipage}

\\
\centerline{(\sc Eval)}
\vspace{-3mm}
&

\centerline{(\sc Asgn-c)}
\vspace{-3mm}

\\\hline

% ---------- row 2 ------------
\begin{minipage}{8.3cm}
\begin{center}
\begin{math}
\inferrule % ASGN-N1
{\sigma\vdash e\trans_{e} v ~~~~~~ \sigma'=\sigma|_{\ell\mapsto v}}
{\rho, a, D, S\vdash\langle\sigma, \ell:=e\rangle\trans_{nc} \sigma'}
\end{math}
\end{center}
\end{minipage}

&

\begin{minipage}{8.3cm}
\begin{center}
\begin{math}
\inferrule % ASGN-N2
{S=\emptyset ~~~~~ S'=\{c\,|\,(\ell,c)\in D\} \\ \sigma\vdash e \trans_{e} v ~~~~~ \sigma'=\sigma|_{\ell\mapsto v} ~~~~~ 
\rho\vdash \langle\sigma', D, S'\rangle\trans_s \langle\sigma'', D'\rangle}
{\rho, a, S\vdash\langle\sigma, D, \ell:=e\rangle\trans_{nc} \langle\sigma'', D'\rangle}
\end{math}
\end{center}
\end{minipage}

\\
\centerline{$\mbox{if } \rho(\ell)=\normal\ \mbox{or } \sigma'(\ell)=\sigma(\ell) \mbox{ or } a=\mbox{\tt true}$}
\vspace{-3mm}
&
\centerline{$\mbox{if } \rho(\ell)=\reactive\ \mbox{and } \sigma'(\ell)\neq\sigma(\ell) \mbox{ and } a=\mbox{\tt false}$}
\vspace{-3mm}
\\

\centerline{(\sc Asgn-n1)}
\vspace{-3mm}
&

\centerline{(\sc Asgn-n2)}
\vspace{-3mm}

\\\hline

% ---------- row 3 ------------
\begin{minipage}{8.3cm}
\begin{center}
\begin{math}
\begin{array}{cc}
\inferrule % DEREF-C-1
{\sigma\vdash \ell\trans_e v}
{\rho, \sigma, c_{self}, D\vdash \ell\trans_{ce}v}
&
\mbox{ if } \rho(\ell)=\normal
\end{array}
\end{math}
\end{center}
\end{minipage}

&

\begin{minipage}{8.3cm}
\begin{center}
\begin{math}
\begin{array}{cc}
\inferrule % DEREF-C-2
{\sigma\vdash \ell\trans_e v \\
D'=D\cup\{(\ell,\self)\}}
{\rho, \sigma, c_{self}\vdash \langle D, \ell\rangle\trans_{ce}\langle D', v\rangle}
&
\mbox{ if } \rho(\ell)=\reactive
\end{array}
\end{math}
\end{center}
\end{minipage}

\\

\centerline{(\sc Deref-c1)}
\vspace{-3mm}
&

\centerline{(\sc Deref-c2)}
\vspace{-3mm}

\\\hline

% ---------- row 4 ------------

\begin{minipage}{8.3cm}
\begin{center}
\begin{math}
\inferrule
{\rho, c_{self} \vdash \langle \sigma, D, S, c\rangle\trans_{cc}\langle \sigma', D', S'\rangle}
{\rho, c_{self} \vdash \langle \sigma, D, S, \batomic\ c~\eatomic\rangle\trans_{cc}\langle \sigma', D', S'\rangle}
\end{math}
\end{center}
\end{minipage}

&

\begin{minipage}{8.3cm}
\begin{center}
\begin{math}
\begin{array}{l@{~~}l}
\inferrule
{\rho, a, D \vdash \langle \sigma, S, c\rangle\trans_{nc}\langle \sigma', S'\rangle}
{\rho, a, D \vdash \langle \sigma, S, \batomic\ c~\eatomic\rangle\trans_{nc}\langle \sigma', S'\rangle}
&
\mbox{if } a=\mbox{\tt true}
\end{array}
\end{math}
\end{center}
\end{minipage}

\\

\centerline{(\sc BeginEnd-c)}
\vspace{-3mm}

&

\centerline{(\sc BeginEnd-n1)}
\vspace{-3mm}
\\\hline

% ---------- row 5 ------------
\multicolumn{2}{c}{
\begin{math}
\begin{array}{l@{~~~~~}l}
\inferrule
{S=\emptyset ~~~~~~~~ a'=\mbox{\tt true} ~~~~~~~~ \rho, D\vdash \langle a', \sigma, S, c\rangle\trans_{nc} \langle\sigma', S'\rangle \\\\ 
S''=S'\cup \{\,c\,|\,(\ell,c)\in D \wedge \sigma(\ell)\neq \sigma'(\ell) \wedge\,\rho(\ell)=\reactive\,\} ~~~~~~~~
\rho\vdash \langle \sigma', D, S''\rangle\trans_{s}\langle \sigma'', D'\rangle}
{\rho, a, S\vdash \langle \sigma, D, \batomic\ c~\eatomic\rangle\trans_{nc}\langle \sigma'', D'\rangle}
&
\mbox{if } a=\mbox{\tt false}
\end{array}
\end{math}
}
\smallskip\\
\multicolumn{2}{c}{(\sc BeginEnd-n2)\smallskip }

\\\hline

% ---------- row 6 ------------
\begin{minipage}{8.3cm}
\begin{center}
\begin{math}
\begin{array}{l@{~\,\,}l}
\hspace{-2mm}\inferrule
{ S=\emptyset ~~~~~S'=\{c\} ~~~~~ \rho\vdash \langle \sigma, D, S'\rangle\trans_s \langle\sigma', D'\rangle}
{\rho, a, S \vdash\langle \sigma, D, \newcons \mbox{\tt { c}}\rangle\trans_{nc} \langle\sigma', D'\rangle}
& 
\mbox{if } a=\mbox{\tt false}
\end{array}
\end{math}
\end{center}
\end{minipage}

&

\begin{minipage}{8.3cm}
\begin{center}
\begin{math}
\begin{array}{ll}
\inferrule
{ S' = S\cup\{c\}}
{\rho, a, \sigma, D  \vdash\langle S, \newcons \mbox{\tt { c}}\rangle\trans_{nc} S'}
&
\mbox{if } a=\mbox{\tt true}
\end{array}
\end{math}
\end{center}
\end{minipage}

\\

\centerline{(\sc NewCons-n1)}
\vspace{-3mm}

&

\centerline{(\sc NewCons-n2)}
\vspace{-3mm}

\\\hline

% ---------- row 7 ------------
\begin{minipage}{8.3cm}
\begin{center}
\begin{math}
\inferrule
{ D'=D\setminus\{(-,c)\} ~~~~~ S'=S \setminus\{c\}}
{\rho, a, \sigma\vdash\langle D, S, \delcons \mbox{\tt { c}}\rangle\trans_{nc} \langle D', S'\rangle}
\end{math}
\end{center}
\end{minipage}

&

\begin{minipage}{8.3cm}
\begin{center}
\begin{math}
\inferrule
{ S' = S\cup\{c\}}
{\rho, \sigma, D, c_{self} \vdash\langle S, \newcons \mbox{\tt { c}}\rangle\trans_{cc} S'}
\end{math}
\end{center}
\end{minipage}

\\

\centerline{(\sc DelCons-n)}
\vspace{-3mm}

&

\centerline{(\sc NewCons-c)}
\vspace{-3mm}

\\\hline

% ---------- row 8 ------------
\begin{minipage}{8.3cm}
\begin{center}
\begin{math}
\inferrule
{ D'=D\setminus\{(-,c)\} ~~~~~ S'=S \setminus\{c\}}
{\rho, \sigma, c_{self} \vdash\langle D, S, \delcons \mbox{\tt { c}}\rangle\trans_{cc} \langle D', S'\rangle}
\end{math}
\end{center}
\end{minipage}

&

\begin{minipage}{8.3cm}
\begin{center}
\begin{math}
\begin{array}{ll}
\inferrule
{~}
{\rho\vdash \langle \sigma, D, S\rangle\trans_{s}\langle \sigma, D\rangle}
& \mbox{ if } S=\emptyset
\end{array}
\end{math}
\end{center}
\end{minipage}

\\

\centerline{(\sc DelCons-c)}
\vspace{-3mm}

&

\centerline{(\sc Solver-1)}
\vspace{-3mm}

\\\hline

% ---------- row 9 ------------

\multicolumn{2}{c}{
\begin{math}
\begin{array}{lll}
\inferrule
{\rho\vdash \langle\sigma, D', S\setminus\{\self\}, \self, \self\rangle\trans_{cc}\langle \sigma', D'', S'\rangle \\\\
\rho\vdash \langle\sigma', D'', S''\rangle\trans_s \langle\sigma'', D'''\rangle
}
{\rho\vdash \langle \sigma, D, S\rangle\trans_{s}\langle \sigma'', D'''\rangle
}
&
\raisebox{6pt}
{
where:
\begin{math}
\left\{\begin{array}{l}
\self=\pick(S)\\
D' = D\setminus\{(-\,,\self)\} \\
S''= S'\cup \{ c\,|\,(\ell,c)\in D''\,\wedge\, \\
~~~~~~~~~~\, \sigma(\ell)\neq\sigma'(\ell)\,\wedge\,\rho(\ell)=\reactive \}
\end{array}\right .
\end{math}
}
&
\raisebox{6pt}{$\mbox{ if } S\neq\emptyset$}
\end{array}
\end{math}
} 
\smallskip\\
\multicolumn{2}{c}{ (\sc Solver-2)\smallskip }

\end{tabular}

\end{small}
\end{boxedminipage}
\vspace{0mm}
\nocaptionrule\caption{\dwhile\ program evaluation.}
\label{fi:semantics}
\end{figure*}

%---------------------------------------

\item $Dep=2^{Loc \times Cons}$ is the set of all subsets of dependencies of constraints from reactive locations. 

\end{itemize}

\noindent Besides a store $\sigma$ and its attribute $\rho$, a configuration includes: 

\begin{itemize}
\item a Boolean flag $a$ that is {\tt true} inside atomic blocks and is used for deferring solver activations; 
\item the set $D$ of dependencies, $D\subseteq Loc \times Cons$; 
\item the scheduling data structure $S\subseteq Cons$ discussed above;
\item a meta-variable $c_{self}$ that denotes the {\em current} constraint (i.e., the constraint that is being evaluated) in constraint execution mode, and is undefined otherwise. If the scheduler were deterministic, $\self$ may be omitted from the configuration, but we do not make this assumption in this paper.
\end{itemize}

%-----------------------------------------
\subsection{Operational Semantics}
\label{ss:semantics}

Most of the operational semantics of the \dwhile\ language can be directly derived from the standard semantics of \while. The most interesting aspects of our extension include reading and writing the reactive memory, adding and deleting constraints, excuting commands atomically, and defining the behavior of the scheduler and its interactions with the other execution modes. Rules for these aspects are given in Figure~\ref{fi:semantics} and are discussed below.

Let $\trans_e\,\subseteq (\Sigma\times Exp)\times Val$ and $\trans_c\,\subseteq (\Sigma\times Comm)\times \Sigma$ be the standard big-step transition relations used in the operational semantics of the \while\ language~\cite{DBLP:books/crc/CRCcompiler2002/PrasadA02}. Besides  $\trans_e$ and $\trans_c$, we use additional transition relations for expression evaluation in constraint mode ($\trans_{ce}$), command execution in normal mode ($\trans_{nc}$), command execution in constraint mode ($\trans_{cc}$), and constraint solver execution in scheduling mode ($\trans_{s}$), as defined in Figure~\ref{fi:transitions}.
Notice that expression evaluation in normal mode can be carried on directly by means of transition relation $\trans_e$ of \while. As discussed below, relation $\trans_{ce}$ is obtained by appropriately modifying $\trans_e$. Similarly, relations $\trans_{nc}$ and $\trans_{cc}$ are obtained by appropriately modifying $\trans_c$. All the rules not reported in Figure~\ref{fi:semantics} can be derived in a straightforward way from the corresponding rules in the standard semantics of \while~\cite{DBLP:books/crc/CRCcompiler2002/PrasadA02}.

The evaluation of a \dwhile\ program is started by rule {\sc Eval}, which initializes the atomic flag $a$ to {\tt false} and both the scheduling queue $S$ and the set $D$ of dependencies to the empty set.

\paragraph{Writing Memory.} Assigning an ordinary memory location in normal execution mode (rule {\sc Asgn-n1}) just changes the store as in the usual semantics of \while. This is also the case when the new value of the location to be assigned equals its old value or inside an atomic block.
Otherwise, if the location $\ell$ to be assigned is reactive, the new value differs from the old one, and execution is outside atomic blocks (rule {\sc Asgn-n2}), constraints depending on $\ell$ are scheduled in  $S$ and are evaluated one-by-one. As we will see, the transition relation $\trans_{s}$ guarantees $S$ to be empty at the end of the constraint solving phase. In conformity with the atomic execution of constraints, assignment in constraint mode (rule {\sc Asgn-c}) just resorts to ordinary assignment in \while\ for both normal and reactive locations. We will see in rule {\sc Solver-2}, however, that constraints can be nevertheless scheduled by other constraints if their execution changes the contents of reactive memory locations.

\paragraph{Reading Memory.} Reading an ordinary memory location in constraint execution mode (rule {\sc Deref-c1}) just evaluates the location to its value in the current store: this is achieved by using transition relation $\trans_e$ of the \while\ semantics. If the location $\ell$ to be read is reactive (rule {\sc Deref-c2}), a new dependency of the active constraint $c_{self}$ from $\ell$ is also added to the set $D$ of dependencies.

\paragraph{Executing Atomic Blocks.} To execute an atomic block in normal mode (rule {\sc BeginEnd-n2}), the uninterruptible command $c$ is first evaluated according to the rules defined by transition $\trans_{nc}$. If the content of some reactive location changes due to the execution of $c$, the solver is then activated at the end of the block. The {\tt begin\_at}\,/\,{\tt end\_at} command has instead no effect when execution is already atomic, i.e., in constraint mode (rule {\sc BeginEnd-c}) and inside atomic blocks (rule {\sc BeginEnd-n1}), except for executing command $c$.

\paragraph{Creating and Deleting Constraints.} In non-atomic normal execution mode, rule {\sc NewCons-n1} creates a new constraint and triggers its first execution by resorting to $\trans_s$. In atomic normal execution and in constraint mode,  rules {\sc NewCons-n2}  and {\sc NewCons-c} simply add the constraint to the scheduling queue. Similarly, rules {\sc DelCons-n} and {\sc DelCons-c} remove the constraint from the scheduling queue and clean up its dependencies from $D$.   
 
\paragraph{Activating the Solver.} Rules {\sc Solver-1} and {\sc Solver-2} specify the behavior of the scheduler, which is started by rules {\sc Asgn-n2} and {\sc BeginEnd-n2}. Rule {\sc Solver-1} defines the termination of the constraint solving phase: this phase  ends only when there are no more constraints to be evaluated (i.e., $S=\emptyset$). Rule  {\sc Solver-2} has an inductive definition. If $S$ is not empty, function $\pick$ selects from $S$ a new active constraint $c_{self}$, which is evaluated in constraint mode after removing from $D$ its old dependencies. The final state ($\sigma''$) and dependencies ($D'''$) are those obtained by applying the scheduler on the store $\sigma'$ obtained after the execution of $c_{self}$ and on a new set $S''$ of constraints. $S''$ is derived from $S$ by adding
any new constraints ($S'$) resulting from the execution of $c_{self}$ along with the constraints depending on reactive memory locations whose content has been changed by $c_{self}$. The definition of $S''$ guarantees that constraints can trigger other constraints (even themselves), even if each constraint execution is regarded as an atomic operation and is never interrupted by the scheduler.

%====================================================================
\section{Convergence Properties}
\label{ss:properties}

In this section, we discuss some general properties of the constraint solving mechanism we adopt in DC, including termination, correctness, and running times. The computation of one-way dataflow constraints (similarly to spreadsheet formulas, circuits, etc.) is traditionally described in the literature in terms of a bipartite directed graph called {\em \dataflow\ graph}. In a \dataflow\ graph, a node can model either an execution unit (e.g., gate~\cite{AlpernHRSZ90}, process~\cite{systemcbook02}, one-way constraint~\cite{Zanden01}, or spreadsheet formula~\cite{kay84}) or an input/output port of one or more units (e.g., gate port, variable, or cell). There is an arc from a port to an execution unit if the unit uses that port as a parameter, and from an execution unit to a port if the unit assigns a value to that port. Paths in a dataflow graph, which is usually acyclic, describe how data flows through the system, and the result of a computation can be characterized algorithmically in terms of an appropriate traversal of the graph (e.g., in a topological order). This model is very effective in describing scenarios where data dependencies are either specified explicitly, or can be derived statically from the program. However, in general the \dataflow\ graph might be not known in advance or may evolve over time in a manner that may be difficult to characterize. In all such cases, proving general properties of programs based on the evaluation of the \dataflow\ graph may not be easy. A more general approach, which we follow in our work, consists of modeling dataflow constraint solving as an iterative process that aims at finding a common fixpoint for the current set of constraints. In our context, a fixpoint is a store that satisfies simultaneously all the relations between reactive memory locations specified by the constraints. 
This provides a unifying framework for solving dataflow constraint systems with both acyclic and cyclic dependencies.

\subsection{Independence of the Scheduling Order} 
\label{ss:termination}

In Section~2, we have assumed that the scheduling order of constraint executions is specified by a function $\pick$ given as a parameter of the solver. A natural question is whether there are any general properties of a set of constraints that let our solver terminate and converge to a common fixpoint independently of the scheduling strategy used by function $\pick$. Using results from the theory of function iterations~\cite{DBLP:conf/popl/CousotC77}, we show that any arbitrary collection of inflationary one-way constraints has the desired property. This class of constraints includes, for instance, any program that can be described in terms of an acyclic dataflow graph such as computational circuits~\cite{AlpernHRSZ90}, non-circular attribute grammars~\cite{knuth68}, and spreadsheets~\cite{kay84} (see Section~\ref{ss:constraint-examples}).  We remark, however, that it is more general as it allows it to address problems that would not be solvable without cyclic dependencies (an example is given in Section~\ref{ss:dynamic-sssp}).

We first provide some preliminary definitions in accordance with the terminology used in~\cite{apt03principles}. We model constraint execution as the application of functions on stores:

\begin{definition}
We denote by $f_c:\Sigma\rightarrow\Sigma$ the function computed by a constraint $c\in Cons$, where $f_c(\sigma)=\sigma'$ if $\langle\sigma,c\rangle\trans_c\sigma'$. We say that store $\sigma\in\Sigma$ is a {\emphasize fixpoint} for $f_c$ if $f_c(\sigma)=\sigma$.
\end{definition}

\noindent To simplify the discussion, throughout this section we assume that constraints only operate on reactive cells and focus our attention on stores where all locations are reactive. The definition of inflationary functions assumes that a partial ordering is defined on the set of stores $\Sigma$:

\begin{definition}[\emphasize Inflationary Functions]
Let $(\Sigma, \preceq)$ be any partial ordering over the set of stores $\Sigma$ and let $f:\Sigma\rightarrow\Sigma$ be a function on $\Sigma$. We say that $f$ is {\em inflationary} if $\sigma \preceq f(\sigma)$ for all $\sigma\in\Sigma$. 
\end{definition}

\noindent Examples of partial orderings on $\Sigma$ will be given in Section~\ref{ss:constraint-examples} and in Section~\ref{ss:cyclic}. A relevant property of partial orderings in our context is the {\em finite chain property}, based on the notion of {\em sequence stabilization}:

\begin{definition}[\emphasize Finite Chain Property]
A partial ordering $(\Sigma,$ $\preceq)$ over $\Sigma$ satisfies the {\emphasize Finite Chain Property} if every non-decreasing sequence of elements $\sigma_0 \preceq\sigma_1 \preceq\sigma_2 \preceq\ldots$ \,from $\Sigma$ eventually stabilizes at some element $\sigma$ in $\Sigma$, i.e., if there exists $j\geq 0$ such that $\sigma_i=\sigma$ for all $i\geq j$.
\end{definition}

\noindent To describe the store modifications due to the execution of the solver, we use the notion of {\em iteration of functions} on stores. Let $F=\{f_1,\ldots,f_n\}$, $\langle a_1, \ldots, a_k\rangle$, and $\sigma\in\Sigma$ be a finite set of functions on $\Sigma$, a sequence of indices in $[1,n]$, and an initial store, respectively. An iteration of functions of $F$ starting at $\sigma$ is a sequence of stores $\langle\sigma_0, \sigma_1, \sigma_2,  \ldots\rangle$ where $\sigma_0=\sigma$ and $\sigma_{i}=f_{a_{i}}(\sigma_{i-1})$ for $i>0$. We say that function $f_{a_{i}}$ is {\em activated} at step $i$. Iterations of functions that lead to a fixed point are called {\em regular}:

\begin{definition}[\emphasize Regular Function Iteration]
~~A function iteration $\langle\sigma_0, \sigma_1, \sigma_2,  \ldots\rangle$ is {\emphasize regular} if it satisfies the following property: for all $f\in F$ and $i\ge 0$, if $\sigma_i$ is not a fixpoint for $f$, then $f$ is activated at some step $j>i$.
\end{definition}

\noindent Using arguments from Chapter 7 of~\cite{apt03principles}, it can be proved that any regular iteration of inflationary functions starting at some initial store stabilizes in a finite number of steps to a common fixpoint:

\begin{lemma}[\emphasize Fixpoint]
\label{le:fixpoint}
Let $(\Sigma, \preceq)$ be any partial ordering over $\Sigma$ satisfying the finite chain property and let $F$ be a finite set of inflationary functions on $\Sigma$. Then any regular iteration of $F$ starting at $\sigma$ eventually stabilizes at a common fixpoint $\sigma'$ of the functions in $F$ such that $\sigma \preceq \sigma'$.
\end{lemma}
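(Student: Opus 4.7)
The plan is to chain three observations: inflation gives a non-decreasing sequence, the finite chain property forces stabilization, and regularity upgrades the stabilized value to a common fixpoint. The ordering $\sigma \preceq \sigma'$ will fall out as a byproduct.

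First I would observe that the sequence of stores $\langle\sigma_0,\sigma_1,\sigma_2,\ldots\rangle$ produced by the iteration is non-decreasing in $(\Sigma,\preceq)$. Indeed, each step has the form $\sigma_i = f_{a_i}(\sigma_{i-1})$, and since $f_{a_i}\in F$ is inflationary we have $\sigma_{i-1} \preceq f_{a_i}(\sigma_{i-1}) = \sigma_i$. A simple induction then gives $\sigma_0 \preceq \sigma_i$ for every $i\ge 0$, which already establishes the final clause $\sigma\preceq\sigma'$ of the lemma once we identify $\sigma'$ as the limit.

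Next I would apply the finite chain property to this non-decreasing sequence: there must exist an index $N\ge 0$ and a store $\sigma'\in\Sigma$ such that $\sigma_i = \sigma'$ for all $i\ge N$. This gives stabilization; what remains is to argue that $\sigma'$ is a common fixpoint for every $f\in F$. The argument I would use is by contradiction: suppose some $f\in F$ satisfies $f(\sigma')\ne \sigma'$. Then $\sigma_N = \sigma'$ is not a fixpoint for $f$, so by the definition of a regular iteration, $f$ must be activated at some step $j>N$. But then
\[
\sigma_j \;=\; f(\sigma_{j-1}) \;=\; f(\sigma') \;\ne\; \sigma',
\]
contradicting the fact that $\sigma_j = \sigma'$ for all $j\ge N$. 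Hence $f(\sigma')=\sigma'$ for every $f\in F$, as required.

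The only subtlety worth flagging is the interaction between stabilization and regularity. Stabilization is about what the sequence eventually equals; regularity is about which functions must eventually be activated. The bridge is exactly the contradiction step above: once the sequence is constant from index $N$ onward, every post-$N$ activation of $f$ must reproduce $\sigma'$, which forces $f(\sigma')=\sigma'$. I do not expect any genuine obstacle here; the proof is essentially an instance of the standard chaotic-iteration argument (Chapter~7 of~\cite{apt03principles}), specialized to our setting in which the ``updates'' are the functions $f_c$ associated with constraints rather than abstract equations. No continuity or monotonicity of the $f_c$ is needed, only inflation plus the finite chain property on $(\Sigma,\preceq)$.
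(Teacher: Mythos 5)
Your proof is correct and is exactly the standard stabilization argument that the paper itself invokes (it gives no proof of its own, deferring to Chapter~7 of~\cite{apt03principles}): inflation yields a non-decreasing chain, the finite chain property forces stabilization at some $\sigma'$ with $\sigma\preceq\sigma'$, and regularity turns any non-fixpoint $f$ into a contradiction with stabilization. Nothing is missing.
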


\noindent We can now discuss convergence properties of our solver:

\begin{theorem} 
\label{th:convergenceFixpoint}
Let $C=\{c_1, \ldots, c_h\}$ be any set of constraints, let $F=\{f_{c_1}, \ldots, f_{c_h}\}$ be the functions computed by constraints in $C$, and let $(\Sigma, \preceq)$ be any partial ordering over $\Sigma$ satisfying the finite chain property. If functions in $F$ are inflationary on $\Sigma$ and $\{f\in F\,|\,f(\sigma)\neq\sigma\}\subseteq S\subseteq F$, then $\langle\rho, \sigma, D, S\rangle\trans_s \langle\sigma', D'\rangle$ and $\sigma'$ is a common fixpoint of the functions in $F$ such that $\sigma \preceq \sigma'$.
\end{theorem}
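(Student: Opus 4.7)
The plan is to reduce the theorem to Lemma~\ref{le:fixpoint} by exhibiting the solver's execution as a regular iteration of the inflationary functions in $F$. I would unfold the recursive definition given by rules {\sc Solver-1} and {\sc Solver-2} into a sequence of configurations $\langle\sigma_i, S_i, D_i\rangle$ with $\sigma_0=\sigma$, $S_0=S$, $D_0=D$: at step $i$ the solver selects $c_i=\pick(S_i)$ and sets $\sigma_{i+1}=f_{c_i}(\sigma_i)$ while updating $D_{i+1}$ and $S_{i+1}$ as prescribed. The store subsequence $\langle\sigma_i\rangle$ is then an iteration of functions in $F$ starting at $\sigma$.

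The crux of the proof is to establish, by induction on $i$, the invariant
\[
\{\, f_c\in F \mid f_c(\sigma_i)\neq\sigma_i \,\}\;\subseteq\; S_i.
\]
The base case is exactly the hypothesis $\{f\in F\mid f(\sigma)\neq\sigma\}\subseteq S$. For the inductive step, take any $f_c\in F$ with $f_c(\sigma_{i+1})\neq\sigma_{i+1}$. If already $f_c(\sigma_i)\neq\sigma_i$, the inductive hypothesis places $c\in S_i$; since rule {\sc Solver-2} removes only $c_i$ from $S$, either $c\neq c_i$ and $c\in S_{i+1}$ directly, or $c=c_i$, handled below. Otherwise $f_c(\sigma_i)=\sigma_i$: then the execution trace of $c$ must genuinely differ on the two stores, forcing $c$ to read (and, by rule {\sc Deref-c2}, depend on) some reactive location $\ell$ whose value was changed by $c_i$. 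The pair $(\ell,c)$ therefore lies in $D_{i+1}$ with $\sigma_i(\ell)\neq\sigma_{i+1}(\ell)$, so the explicit set construction in {\sc Solver-2} reinserts $c$ into $S_{i+1}$. The self-triggering sub-case $c=c_i$ is treated analogously: re-executing $c_i$ on $\sigma_{i+1}$ differs only if it reads a location it just wrote, which reinstates $c_i$ via the same dependency mechanism. Inflationarity precludes the pathological case where $c$ writes but does not read a location changed by $c_i$, since such a rewrite would step strictly downwards in $(\Sigma,\preceq)$.

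Given the invariant, the iteration $\langle\sigma_i\rangle$ is regular: whenever $f$ is not a fixpoint at $\sigma_i$ the corresponding $c$ lies in the finite set $S_i$, and $\pick$ must eventually select it. Lemma~\ref{le:fixpoint} then guarantees stabilization at some common fixpoint $\sigma'\succeq\sigma$. Past the stabilization point no reactive cell changes value, so rule {\sc Solver-2} adds no new constraints and $S$ strictly shrinks; once $S$ is empty, rule {\sc Solver-1} fires and produces the desired transition $\langle\rho,\sigma,D,S\rangle\trans_s\langle\sigma',D'\rangle$.

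The main obstacle I expect is the inductive step of the invariant, specifically justifying that every constraint whose fixpoint status changes is brought back into $S$. This hinges on two non-trivial points: that the dependency set $D$ faithfully records every reactive read via rule {\sc Deref-c2}, and that inflationarity of $F$ forbids ``write-only'' reactivations that would otherwise bypass the dependency-based rescheduling in rule {\sc Solver-2}.
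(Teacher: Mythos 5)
Your proposal follows essentially the same route as the paper's proof sketch: unfold rule {\sc Solver-2} into a function iteration, establish the invariant that every constraint for which the current store is not a fixpoint remains in $S_i$, deduce regularity, and invoke Lemma~\ref{le:fixpoint}. The one point to tighten is the order of the argument: the claim that ``$\pick$ must eventually select it'' should be justified \emph{before} the lemma is invoked (your termination reasoning currently appears only after stabilization is assumed), and the paper does this by observing that each solver step either strictly increases the store --- which the finite chain property allows only finitely often --- or leaves it unchanged and shrinks $|S|$ by one, so the solver terminates and every pending constraint is eventually picked.
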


\begin{sketch} Consider the sequence  $\langle S_0, S_1, \ldots\rangle$ of scheduling sets resulting from a recursive application of rule {\sc Solver-2} terminated by rule {\sc Solver-1} (see Figure~\ref{fi:semantics}), with $S_0=S$. Let $c_i=\pick(S_i)$ the constraint executed at step $i$, and let $q=\langle\sigma_0, \sigma_1, \sigma_2,  \ldots\rangle$ be the function iteration such that $\sigma_0=\sigma$ and $\sigma_{i+1}=f_{c_i}(\sigma_i)$. We prove that $q$ is regular. Notice that $S_0=S$ contains initially all functions for which $\sigma_0$ is not a fixpoint. Furthermore, $S_{i+1}$ is obtained from $S_i$ by removing $c_i$ and adding at least all constraints for which $\sigma_{i+1}$ is not a fixpoint. It remains to show that all constraints are activated at some step, i.e., they are eventually removed from $S$. This can be proved by observing that an inflationary function $f_{c_i}$ either leaves the store unchanged, and therefore $|S|$ decreases by one, or produces a store $\sigma_{i+1}=f_{c_i}(\sigma_i)$ strictly larger than $\sigma_i$, i.e., $\sigma_i\preceq \sigma_{i+1}$ and $\sigma_i\neq\sigma_{i+1}$. By the finite chain property, this cannot happen indefinitely, so $S$ eventually gets empty. Since $q$ is regular, the proof follows from Lemma~\ref{le:fixpoint}.
\end{sketch}

\noindent Assuming that functions in Lemma~\ref{le:fixpoint} and Theorem~\ref{th:convergenceFixpoint} are also {\em monotonic}, it is possible to prove that the solver always converges to the {\em least} common fixpoint, yielding deterministic results independently of the scheduling order. We recall that a function $f$ is monotonic if $\sigma \preceq \sigma'$ implies $f(\sigma)\preceq f(\sigma')$ for all $\sigma, \sigma'\in\Sigma$.

%---------------------------------------------------------------
\subsection{Acyclic Systems of Constraints} 
\label{ss:constraint-examples}

In this section we show that, if a system of constraints is acyclic, then our solver always converges deterministically to the correct result, without the need for programmers to prove any stabilization properties of their constraints. We notice that this is the most common case in many applications, and several efficient techniques can be adopted by constraint solvers to automatically detect cycles introduced by programming errors~\cite{Zanden01}. In particular, we prove termination of our solver on any system of constraints that models a computational circuit subject to incremental changes of its input. For the sake of simplicity, we focus on single-output constraints, to which any multi-output constraint can be reduced.

A circuit is a directed acyclic graph $G=(V,E)$ with values computed at the nodes, referred to as {\em gates}~\cite{AlpernHRSZ90}. Each node $u$ is associated with an output function $g_u$ that computes a value $val(u)=g_u(val(v_1),...,$ $val(v_{d_u}))$, where $d_u$ is the indegree of node $u$ and, for each $i\in[1,d_u]$, arc $(v_i,u)\in E$. Arcs entering $u$ are ordered and, if $d_u=0$, $u$ is called an input gate (in this case, $g_u$ is constant). The gate values and functions may have any data types. For simplicity, we will assume that there is only one gate in the graph with outdegree $0$: the value computed at this gate is the output of the circuit. 
The circuit value problem is to update the output of the circuit whenever the value of an input gate is changed. This problem is equivalent to the scenario where the gate values $val(u)$ are reactive memory cells, and each non-input gate $u$ is computed by a constraint $c_u$ that assigns $val(u)$ with $g_u(val(v_1),...,val(v_{d_u}))$. A circuit update operation changes the value $val(u)$ of any input gate $u$ to a new constant. Any such update triggers the solver with $S=\{c_{u'}\,|\,(u,u')\in E\}$. We now show that the solver updates correctly the circuit output.

Let $n$ be the number of circuit nodes, and let $u_1,u_2,...,u_n$ be any topological ordering of the nodes, where $u_n$ is the output gate of the circuit. Let  $\sigma\in \Sigma$ be a store with $dom(\sigma)=\{val(u_i)\,|\,i\in [1,n]\}$. We say that a value $val(u_i)$ is incorrect in $\sigma$ if $\sigma$ is not a fixpoint for $c_{u_i}$. Let $b(\sigma)=\sum_{i=1}^{n}2^i\cdot \chi_{\sigma}(u_i)$, where $\chi_{\sigma}(u_i)=1$ if value $val(u_i)$ is incorrect in $\sigma$, and $0$ otherwise. We define a partial ordering $(\Sigma, \preceq)$ as follows: 

\begin{definition}
\label{de:poset-circuit}
For any two stores $\sigma$ and $\sigma'$ in $\Sigma$, we say that $\sigma\preceq\sigma'$ if $b(\sigma)\ge b(\sigma')$.
\end{definition}

\noindent Relation $\preceq$ is clearly reflexive, antisymmetric, and transitive. Moreover, the constraints $c_{u}$ compute inflationary functions on $\Sigma$. Let $\sigma'=f_{c_u}(\sigma)$ be obtained by evaluating contraint $c_u$ in store $\sigma$. If $val(u)$ is correct in $\sigma$ (i.e., $\chi_{\sigma}(u)=0$), then $\sigma'=\sigma$. Otherwise, $\chi_{\sigma}(u)=1$, $\chi_{\sigma'}(u)=0$, and $\chi_{\sigma}(\hat{u})=\chi_{\sigma'}(\hat{u})$ for all gates $\hat{u}$ that precede $u$ in the topological ordering. This implies that $b(\sigma)\ge b(f_{c_u}(\sigma))$, and thus $\sigma\preceq f_{c_u}(\sigma)$.
Since $b(\cdot)$ can assume only $2^n$ possible values, $(\Sigma, \preceq)$ also satisfies the finite chain property.
After updating an input gate $u$, all constraints for which $\sigma$ is no longer a fixpoint are included in the set $S=\{c_{u'}\,|\,(u,u')\in E\}$ on which the solver is started. Hence, all the hypotheses of Theorem~\ref{th:convergenceFixpoint} hold and the solver converges to store $\sigma'\in\Sigma$ that is a common fixpoint of all $f_{c_{u}}$, i.e., a store $\sigma'$ in which all gate values are correct. This implies that the circuit output is also correct. It is not difficult to prove that, if we let function $\pick(S)$ return constraints in topological order, then all gates that may be affected by the input change are evaluated exactly once during the update.

%-------------------------------------------------------------
\subsection{Cyclic Systems of Constraints: an Example} 
\label{ss:dynamic-sssp}
\label{ss:cyclic}

Differently from previous approaches to solving one-way dataflow constraints~\cite{DemersRT81,Hudson91, AlpernHRSZ90, Hoover87}, which were targeted to acyclic dependencies, our abstract machine can handle the most general case of cyclic constraints embedded within an imperative program. This opens up the possibility to address problems that would not be solvable using acyclic dataflow graphs, backed up with a formal machinery to help designers prove their convergence properties (Section~\ref{ss:termination}). We exemplify this concept by considering the well known problem of maintaining distances in a graph subject to local changes to its nodes or arcs. In the remainder of this section we show how to specify an incremental variant of the classical Bellman-Ford's single-source shortest path algorithm~\cite{b-orp-58} in terms of a (possibly cyclic) system of one-way constraints. Compared to purely imperative specifications~\cite{Demetrescu+2001a}, the formulation of the incremental algorithm in our mixed imperative/dataflow framework is surprisingly simple and requires just a few lines of code. By suitably defining a partial order on $\Sigma$ and an appropriate $\pick$ function, we show that our solver finds a correct solution within the best known worst-case time bounds for the problem.

\begin{figure}
\begin{boxedminipage}{7.3cm}
\begin{footnotesize}
\begin{tabbing}
{\tt insert}$(u,v,w)$:\\ 
~~~~~~~\=$E:=E\cup \{(u,v)\}$\\
\>$w(u,v)$ := $w$\\
\>\newcons$(\,\underbrace{\wif\:d[u] + w(u,v) < d[v]\:\wthen\:d[v]\:\mbox{:=}\:d[u] + w(u,v)}_{c_{uv}}\,)$
\end{tabbing}
\end{footnotesize}\hfill
\vspace{-6mm}\begin{footnotesize}
\begin{tabbing}
{\tt decrease}$(u,v,\delta)$:\\ 
~~~~~~~\=$w(u,v)$ := $w(u,v)-\delta$
\end{tabbing}
\end{footnotesize}
\end{boxedminipage}
\vspace{0mm}
\caption{Incremental shortest path updates in a mixed imperative/\dataflow\ style.}
\label{fi:incremental-sp}
\end{figure}

\paragraph{Incremental Shortest Paths.} Let $G=(V,E,w)$ be a directed graph with real edge weights $w(u,v)$, and let $s$ be a source node in $V$. We consider the incremental shortest path problem that consists of updating the distances $d[u]$ of all nodes $u\in V$ from the source $s$ after inserting any new edge in the graph, or decreasing the weight of any existing edge. For the sake of simplicity we assume that no negative-weight cycles are introduced by the update and that, if a node $u$ is unreachable from the source, its distance $d[u]$ is $+\infty$. 

\paragraph{Update Algorithm.} The incremental shortest path problem can be solved in our framework as follows. We keep edge weights and distances in reactive memory. Assuming to start from a graph with no edges, we initialize $d[s]$ := $0$ and $d[u]$ := $+\infty$ for all $u\neq s$. The pseudocode of update operations that insert a new edge and decrease the weight of an existing edge by a positive amount $\delta$ are shown in Figure~\ref{fi:incremental-sp}. 
Operation {\tt insert}$(u,v,w)$ adds edge $(u,v)$ to the graph with weight $w$ and creates a new constraint $c_{uv}$ for the edge: $c_{uv}$ simply {\em relaxes} the edge if Bellman's inequality $d[u]+w(u,v)\ge d[v]$ is violated~\cite{b-orp-58}. The constraint is immediately executed after creation (see rule {\sc NewCons-n1} in Figure~\ref{fi:semantics}) and the three pairs $(d[u], c_{uv})$, $(d[v], c_{uv})$, and $(w(u,v), c_{uv})$ are added to the set of dependencies $D$. Any later change to $d[u]$, $d[v]$ or $w(u,v)$, which may violate the inequality $d[u]+w(u,v)\ge d[v]$, will cause the re-execution of $c_{uv}$. Decreasing the weight of an existing edge $(u,v)$ by any positive constant $\delta$ with {\tt decrease}$(u,v,\delta)$ can be done by just updating $w(u,v)$. In view of rule {\sc Asgn-n2} of Figure~\ref{fi:semantics}, the system reacts to the change and re-executes automatically $c_{uv}$ and any other affected constraints.

Using the machinery developed in Section~\ref{ss:termination} and suitably defining a partial order on $\Sigma$ and an appropriate $\pick$ function, we now show that our solver finds a correct solution within the best known worst-case time bounds for the problem, i.e., it updates distances correctly after any {\tt insert} or {\tt decrease} operation. 

\paragraph{Termination and Correctness.} For the sake of convenience, we denote by $d_{\sigma}[u]=\sigma(d[u])$ and by $w_{\sigma}(u,v)=\sigma(w(u,v))$ the distance of node $u$ and the weight of edge $(u,v)$ in store $\sigma$, respectively. For any two graphs $G_1$ and $G_2$ on the same vertex set, we denote by $G_{1}\uplus G_{2}$ the multigraph with vertex set $V$ and edge set $E_{1}\uplus E_{1}$, where $\uplus$ indicates the join of multisets: the same edge may thus appear twice in $E_{1}\uplus E_{2}$, possibly with different weights. 
Let us focus our attention on a restricted set of stores, which encompasses all possible configurations of the  reactive memory during the execution of the solver triggered by an update:

\begin{definition}
Let $G_{old}=(V, E_{old}, w_{old})$ and $G=(V, E, w)$ be the graph before and after inserting a new edge or decreasing the weight of an edge, respectively. We denote by
$\Sigma_{sp}\subseteq \Sigma$ the set of all functions $\sigma:Loc\rightarrow Val$ such that:
\begin{itemize}
\item $dom(\sigma)=\{d[u]\,|\,u\in V\}\cup\{w(u,v)\,|\,(u,v)\in E\}\subseteq Loc$;
\item for each $u\in V$, $d_{\sigma}[u]$ is the weight of a simple path (i.e., with no repeated nodes) from $s$ to $u$ in $G_{old}\uplus G$;
\item for each $(u,v)\in E$, $w_{\sigma}(u,v)$ is fixed as the weight of edge $(u,v)$ in $G$.
\end{itemize}
\end{definition}

\noindent Notice that, as simple paths in a graph are finite, the number of possible values each $d_{\sigma}[u]$ can attain is finite, and therefore $\Sigma_{sp}$ is a finite set. We define a partial ordering $(\Sigma_{sp}, \preceq)$ on $\Sigma_{sp}$ as follows: 

\begin{definition}
\label{de:poset-shortest-paths}
Let $\sigma$ and $\sigma'$ be any two stores in $\Sigma_{sp}$. We say that $\sigma\preceq\sigma'$ if $d_{\sigma}[u]\ge d_{\sigma'}[u]$ for all $u\in V$.
\end{definition}

\noindent Relation $\preceq$ is reflexive, antisymmetric, and transitive. Moreover, since $\Sigma_{sp}$ is finite, $(\Sigma_{sp}, \preceq)$ satisfies the finite chain property. We now prove that constraints $c_{uv}$ compute inflationary functions on $\Sigma_{sp}$.

\begin{lemma}
Functions $f_{c_{uv}}$ computed by constraints $c_{uv}$ of Figure~\ref{fi:incremental-sp} are inflationary with respect to the partial ordering $(\Sigma_{sp}, \preceq)$ of Definition~\ref{de:poset-shortest-paths}.
\end{lemma}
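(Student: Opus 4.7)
\begin{sketch}
The plan is to fix an arbitrary $\sigma \in \Sigma_{sp}$, set $\sigma' := f_{c_{uv}}(\sigma)$, and show both $\sigma' \in \Sigma_{sp}$ and $\sigma \preceq \sigma'$. I would begin with a case split on the guard of $c_{uv}$. If $d_\sigma[u] + w_\sigma(u,v) \ge d_\sigma[v]$ the body is skipped, $\sigma' = \sigma$, and both conclusions are immediate. Otherwise $\sigma'$ coincides with $\sigma$ except that $d_{\sigma'}[v] = d_\sigma[u] + w_\sigma(u,v)$ is strictly smaller than $d_\sigma[v]$.

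Granted the firing case, the ordering half is easy: the distance at every node other than $v$ is unchanged and $d[v]$ strictly decreases, so $d_\sigma[x] \ge d_{\sigma'}[x]$ for every $x \in V$, which is exactly $\sigma \preceq \sigma'$ by Definition~\ref{de:poset-shortest-paths}.

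The membership half requires exhibiting a simple $s$-to-$v$ path in $G_{old}\uplus G$ whose weight is exactly $d_\sigma[u] + w_\sigma(u,v)$. I would start from a witnessing simple $s$-to-$u$ path $P_u$ of weight $d_\sigma[u]$, which exists because $\sigma \in \Sigma_{sp}$, and form the concatenation $P_u \cdot (u,v)$. When $v \notin V(P_u)$, this concatenation is itself a simple $s$-to-$v$ path of the required weight, so the membership claim holds in that subcase.

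The main obstacle is the subcase $v \in V(P_u)$, where $P_u \cdot (u,v)$ is no longer simple. My plan here is to decompose $P_u = P_1 \cdot P_2$ with $P_1$ the simple $s$-to-$v$ prefix and $P_2$ the $v$-to-$u$ suffix, so that $P_2 \cdot (u,v)$ forms a closed walk through $v$. The no-negative-cycles assumption stated in Section~\ref{ss:dynamic-sssp} gives $w(P_2) + w_\sigma(u,v) \ge 0$, hence $d_\sigma[u] + w_\sigma(u,v) \ge w(P_1)$. I would then either choose the witness $P_u$ to avoid $v$ whenever such a witness for $d_\sigma[u]$ exists, which reduces us to the easy subcase, or, when every simple $s$-to-$u$ path of weight $d_\sigma[u]$ passes through $v$, combine $w(P_1) \le d_\sigma[u] + w_\sigma(u,v) < d_\sigma[v]$ with the invariant that the solver has already relaxed $d[v]$ along $P_1$ via earlier firings, deriving a contradiction with the guard. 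Making this last invariant precise is the subtle step; everything else is bookkeeping on top of the cycle-weight inequality.
\end{sketch}
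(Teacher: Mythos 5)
Your overall route is the same as the paper's: dispose of the non-firing case, observe that only $d[v]$ changes and that it decreases (which gives $\sigma\preceq f_{c_{uv}}(\sigma)$ once $f_{c_{uv}}(\sigma)$ is known to lie in $\Sigma_{sp}$), and reduce everything to showing that the new value $d_{\sigma}[u]+w_{\sigma}(u,v)$ is again the weight of a simple $s$-to-$v$ path in $G_{old}\uplus G$. The paper settles this last point in one sentence, which in effect assumes your easy subcase $v\notin V(P_u)$. You are right that the subcase $v\in V(P_u)$ is where all the difficulty sits, and your cycle-weight inequality $d_{\sigma}[u]+w_{\sigma}(u,v)\ge w(P_1)$ is correct — but it only bounds the new value from below by a simple-path weight, it does not make it equal to one.

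The gap is that neither of your two escape routes can be completed under the lemma's stated hypotheses. The lemma quantifies over every $\sigma\in\Sigma_{sp}$, and the definition of $\Sigma_{sp}$ only requires each $d_{\sigma}[x]$ \emph{separately} to be the weight of some simple path; there is no solver history and no consistency between the witnesses chosen for different nodes. So the invariant ``$d[v]$ has already been relaxed along $P_1$'' is not available, and in fact the membership claim is false on $\Sigma_{sp}$ as defined. Concretely, take $V=\{s,a,v,u\}$ with edges $s\to a$ and $a\to v$ of weight $0$, $s\to v$ of weight $100$, $v\to u$ of weight $1$, and let the update insert $u\to v$ with weight $1$ (the only cycle has weight $2$). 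The store with $d[s]=d[a]=0$, $d[v]=100$, $d[u]=1$ lies in $\Sigma_{sp}$, every witness for $d[u]=1$ passes through $v$, the guard of $c_{uv}$ fires since $1+1<100$, and the assigned value $2$ is the weight of no simple $s$-to-$v$ path (those weights are $0$ and $100$). Hence $f_{c_{uv}}$ does not map $\Sigma_{sp}$ into itself, and $\sigma\preceq f_{c_{uv}}(\sigma)$ is not even defined by Definition~\ref{de:poset-shortest-paths} for this $\sigma$. The repair has to go into the definition of $\Sigma_{sp}$ rather than into the proof: one must restrict to stores admitting a single consistent family of witnessing paths (so that whenever $v$ lies on the witness for $u$, $d_{\sigma}[v]$ is already at most the weight of its $s$-to-$v$ prefix, which kills the firing in your hard subcase) and check that this smaller set contains the initial store and is closed under the $f_{c_{uv}}$. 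Your scrutiny has exposed a step that the paper's own proof asserts without justification; what your plan misses is that no argument confined to the lemma as stated can close it.
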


\begin{proof}
Let $(u,v)$ be any edge in $E$ and let $\sigma$ be any store in $\Sigma_{sp}$. If $\sigma=f_{c_{uv}}(\sigma)$ then clearly $\sigma\preceq f_{c_{uv}}(\sigma)$. Consider the case $\sigma\neq f_{c_{uv}}(\sigma)$. Notice that $\sigma$ and $f_{c_{uv}}(\sigma)$ can only differ in the value of memory location $d[v]$. Since there are no negative-weight cycles and $d_{\sigma}[u]$ is the weight of a simple path in $G_{old}\uplus G$, then so is $d_{f_{c_{uv}}(\sigma)}[v]=d_{\sigma}[u]+w_{\sigma}(u,v)$. Furthermore, as $c_{uv}$ never increases distances, then $d_{f_{c_{uv}}(\sigma)}[v]\le d_{\sigma}[v]$ . Therefore, $\sigma\preceq f_{c_{uv}}(\sigma)$. This shows that $f_{c_{uv}}$ is inflationary.
\end{proof}

\noindent It is not difficult to see that, if all distances are correct before an update, then the solver is started on a store $\sigma\in\Sigma_{sp}$. As all the hypotheses of Theorem~\ref{th:convergenceFixpoint} hold, the solver converges to store $\sigma'\in\Sigma_{sp}$ that is a common fixpoint of all $f_{c_{uv}}$. Therefore: (1) $d_{\sigma'}[u]$'s are weights of simple paths in $G$, and (2) they satisfy all Bellman's inequalities. It is well known~\cite{AMO93} that node labels satisfying both properties (1) and (2) are in fact the correct distances in the graph.

\paragraph{Running Time.} If we let function $\pick(S)$ return the constraint $c_{uv}\in S$ with the largest variation of $d[u]$ due to the update, then we can adapt results from~\cite{Demetrescu+2001a} and show that each $c_{uv}$ is executed by the solver at most once per update. If $\pick$ uses a standard priority queue with $O(\log n)$ time per operation, then the solver updates distances incrementally in $O(m\log n)$ worst-case time, even in the presence of negative edge weights (but no negative cycles). This can be reduced to $O(m+n\log n)$ by just creating one constraint per node, rather that one constraint per edge, and letting it relax all outgoing edges. This matches the best known algorithmic bounds for the problem~\cite{Demetrescu+2001a}. We remark that, recomputing distances from scratch with the best known static algorithm would require $O(mn)$ time in the worst case if there are negative edge weights, and $O(m+n\log n)$ time otherwise. In Section~\ref{ss:incremental-experiments} we will analyze experimentally the performances of our constraint-based approach showing that in practice it can be orders of magnitude faster than recomputing from scratch, even when all weights are non-negative.

%====================================================================
\section{Embodiment into C/C++}
\label{se:C}

In this section we show how to apply the concepts developed in Section~\ref{se:model} to the C and C++ languages, deriving an extension that we call \dc. \dc\ has exactly the same syntax as C/C++, but operations that read or modify objects have a different semantics. All other primitives, including creating/deleting constraints, allocating/deallocating reactive objects, and opening/closing atomic blocks, are provided as runtime library functions\footnote{A detailed documentation of the \dc\ application programming interface, including stricter library naming conventions and several additional features not covered in this paper, is available at the URL: {\tt http://www.dis.uniroma1.it/\textasciitilde demetres/dc/}} (see Figure~\ref{fi:dc.h}).

\begin{figure}
\begin{boxedminipage}{\columnwidth}
\begin{footnotesize}
\begin{verbatim}
typedef void (*cons_t)(void*);
\end{verbatim}
\vspace{-4mm}\begin{verbatim}
int newcons(cons_t cons, void* param);
void delcons(int cons_id);
void* rmalloc(size_t size);
void rfree(void* ptr);
void begin_at();
void end_at();
void arm_final(int cons_id, cons_t final);
void set_comp(int (*comp)(void*, void*));
\end{verbatim}
\end{footnotesize}
\end{boxedminipage}
\nocaptionrule\caption{Main functions of the \dc\ language extension.}
\label{fi:dc.h}
\end{figure}

\paragraph{Reactive Memory Allocation.} Similarly to other automatic change propagation approaches (e.g.,~\cite{Amulet97,DBLP:conf/pldi/AcarBLTT10}), in \dc\ all objects allocated statically or dynamically are non-reactive by default. Reactive locations are allocated dynamically using library functions {\tt rmalloc} and {\tt rfree}, which work just like {\tt malloc} and {\tt free}, but on a separate heap.

\paragraph{Opening and Closing Atomic Blocks.} Atomic blocks are supported in  \dc\ using two library functions {\tt begin\_at} and {\tt end\_at}. Calling {\tt begin\_at} opens an atomic block, which should be closed with a matching call to {\tt end\_at}. Nested atomic blocks are allowed, and are handled using a counter of nesting levels so that the solver is only resumed at the end of the outer block, processing any pending constraints that need to be first executed or brought up to date as a result of the block's execution.

\paragraph{Creating and Deleting Constraints.} For the sake of simplicity, in Section~\ref{se:model} constraints have been modeled as ordinary commands. \dc\ takes a more flexible approach: constraints are specified as closures formed by a function that carries out the computation and a user-defined parameter to be passed to the function. Different constraints may therefore share the same function code, but have different user-defined parameters. New constraint instances can be created by calling {\tt newcons}, which takes as parameters a pointer {\tt cons} to a function and a user-defined parameter {\tt param}. When invoked in non-atomic normal execution mode, {\tt newcons} executes immediately function {\tt cons} with parameter {\tt param}, and logs all dependencies between the created constraint and the reactive locations read during the execution. If a constraint is created inside an atomic block (or inside another constraint), its first evaluation is deferred until the end of the execution of the current block (or constraint). All subsequent re-executions of the constraint triggered by modifications of the reactive cells it depends on will be performed with the same value of {\tt param} specified at the creation time. {\tt newcons} returns a unique id for the created constraint, which can be passed to {\tt delcons} to dispose of it.

\paragraph{Reading and Modifying Objects.}

Reading and modifying objects in reactive memory can be done in \dc\ by evaluating ordinary C/C++ expressions. We remark that no syntax extensions or explicit macro/function invocations are required.

\paragraph{Customizing the Scheduler.}~Differently from other approaches~\cite{Amulet97}, \dc\ allows programmers to customize the execution order of scheduled constraints. While the default \pick\ function of \dc\ (which gives higher priority to least recently executed constraints) works just fine in practice for a large class of problems (see Section~\ref{se:experiments}), the ability to replace it can play an important role for some specific problems, as we have seen in the incremental shortest paths example of Section~\ref{ss:dynamic-sssp}. \dc\ provides a function {\tt set\_comp} that installs a user-defined comparator to determine the relative priority of two constraints. The comparator receives as arguments the user-defined parameters associated with the constraints to be compared.

\paragraph{Final Handlers.} 
An additional feature of \dc, built on top of the core constraint handling mechanisms described in Section~\ref{fi:semantics}, is the ability to perform some finalization operations only when the results of constraint evaluations are stable, i.e., when the solver has found a common fixpoint. For instance, a constraint computing the attribute of a widget in a graphic user interface may also update the screen by calling drawing primitives of the GUI toolkit: if a redrawing occurs at each constraint execution, this may cause unnecessary screen updates and flickering effects. Another usage example of this feature will be given in Section~\ref{ss:repair}.

\dc\ allows users to specify portions of code for a constraint to be executed as final actions just before resuming the underlying imperative program interrupted by the solver activation. This can be done by calling function {\tt arm\_final} during constraint solving: the operation schedules a {\em final handler} to be executed at the end of the current solving session. The function takes as parameters a constraint id and a pointer to a final handler, or {\tt NULL} to cancel a previous request. A final handler receives the same parameter as the constraint it is associated to, but no dependencies from reactive locations are logged during its execution. All final handlers are executed in normal execution mode as a whole inside an atomic block.

%------------------------------
\begin{figure}
\begin{boxedminipage}{\columnwidth}
\begin{scriptsize}
\begin{verbatim}
struct robject {
    void* operator new(size_t size) { return rmalloc(size); }
    void operator delete(void* ptr) { rfree(ptr); }
};

static void con_h(void*), fin_h(void*);
class rcons {
    int id;
  public:
    virtual void cons() = 0;
    virtual void final() {}
    rcons() { id = -1;   }
   ~rcons() { disable(); }
    void enable()  { if (id == -1) id = newcons(con_h, this); }
    void disable() { if (id != -1) { delcons(id); id = -1; }  }
    void arm_final()   { if (id != -1) arm_final(id, fin_h);  }
    void unarm_final() { if (id != -1) arm_final(id, NULL);   }
};

void con_h(void* p) { ((rcons*)p)->cons();  }
void fin_h(void* p) { ((rcons*)p)->final(); }
\end{verbatim}
\end{scriptsize}
\end{boxedminipage}
\nocaptionrule\caption{C++ wrapping of DC primitives.}
\label{fi:examples:dcpp}
\end{figure}
%------------------------------

\paragraph{C++ Wrapping of \dc\ Primitives.} The examples in the remainder of this paper are based on a simple C++ wrapping of the \dc\ primitives, shown in Figure~\ref{fi:examples:dcpp}. We abstract the concepts of reactive object and constraint using two classes: {\tt robject} and {\tt rcons}. The former is a base class for objects stored in reactive memory. This is achieved by overloading the {\tt new} and {\tt delete} operators in terms of the corresponding \dc\ primitives {\tt rmalloc} and {\tt rfree}, so that all member variables of the object are reactive. Class {\tt rcons} is a virtual base class for objects representing dataflow constraints. The class provides a pure virtual function called {\tt cons}, to be defined in subclasses, which provides the user code for a constraint. An additional empty {\tt final} function can be optionally overridden in subclasses to define the finalization code for a constraint. The class also provides functions {\tt enable} and {\tt disable} to activate/deactivate the constraint associated with the object, and functions {\tt arm\_final} and {\tt unarm\_final} to schedule/unschedule the execution of final handlers.

%====================================================================
\section{Applications and Programming Examples}
\label{ss:examples}

In this section, we discuss how \dc\ can improve C/C++ programmability in three relevant application scenarios. To the best of our knowledge, these applications have not been explored before in the context of dataflow programming. All the code we show is real.

% ---------------------------------------------------------------------------
\subsection{Incremental Computation}
\label{ss:incremental}

%------------------------------
\begin{figure}
\begin{boxedminipage}{\columnwidth}
\begin{scriptsize}
\begin{verbatim}
template<typename T> struct node : robject, rcons {
    enum op_t { SUM, PROD };
    T    val;
    op_t op;
    node *left, *right;
    node(T v):   val(v), left(NULL), right(NULL) { enable(); }
    node(op_t o): op(o), left(NULL), right(NULL) { enable(); }
    void cons() {
        if (left == NULL || right == NULL) return;
        switch (op) {
            case SUM:  val = left->val + right->val; break;
            case PROD: val = left->val * right->val; break;
        }
    }
};
\end{verbatim}
\end{scriptsize}
\end{boxedminipage}
\nocaptionrule\caption{Incremental evaluation of expression trees.}
\label{fi:examples:exptrees}
\end{figure}
%------------------------------

In many applications, the input data is subject to continuous updates that need to be processed efficiently. For instance, in a networking scenario, routers must react quickly to link failures by updating routing tables in order to minimize communication delays. When the input is subject to small changes, a program may fix incrementally only the portion of the output affected by the update, without having to recompute the entire solution from scratch. For many problems, efficient {\em ad hoc} algorithms are known that can update the output asymptotically faster that recomputing from scratch, delivering in practice speedups of several orders of magnitude~\cite{DemersRT81,ae-174-DFI2005}. Such dynamic algorithms, however, are typically difficult to design and implement, even for problems that are easy to be solved from-scratch. A language-centric approach, which was extensively explored in both functional and imperative programming languages, consists of automatically turning a conventional static algorithm into an incremental one, by selectively recomputing the portions of a computation affected by an update of the input. This powerful technique, known as self-adjusting computation~\cite{DBLP:conf/pldi/AcarBLTT10}, provides a principled way of deriving efficient incremental code for several problems. We now show that dataflow constraints can provide an effective alternative for specifying incremental programs. Later in this section we discuss differences and similarities of the two approaches.

\paragraph{Example.} To put our approach into the perspective of previous work on self-adjusting computation, we revisit the problem of incremental re-evaluation of binary expression trees discussed in~\cite{HammerAC09}. This problem is a special case of the circuit evaluation described in Section~\ref{ss:constraint-examples}: input values are stored at the leaves and the value of each internal node is determined by applying a binary operator (e.g., sum or product) on the values of its children. The final result of the evaluation is stored at the root. We start from the conventional node structure that a programmer would use for a binary expression tree, containing the type of the operation computed at the node (only relevant for internal nodes), the node's value, and the pointers to the subtrees. Our \dc-based solution (see Figure~\ref{fi:examples:exptrees}) simply extends the node declaration by letting it inherit from classes {\tt robject} and {\tt rcons}, and by providing the code of a constraint that computes the value of the node in terms of the values stored at its children. Everything else is exactly what the programmer would have done anyway to build the input data structure. An expression tree can be constructed by just creating nodes and connecting them in the usual way:

\begin{footnotesize}
\begin{verbatim}
node<int> *root    = new node<int>(node<int>::SUM);
root->left         = new node<int>(10);
root->right        = new node<int>(node<int>::PROD);
root->right->left  = new node<int>(2);
root->right->right = new node<int>(6);
\end{verbatim}
\end{footnotesize}

\begin{figure*}
\centerline{~~\includegraphics[width=13cm]{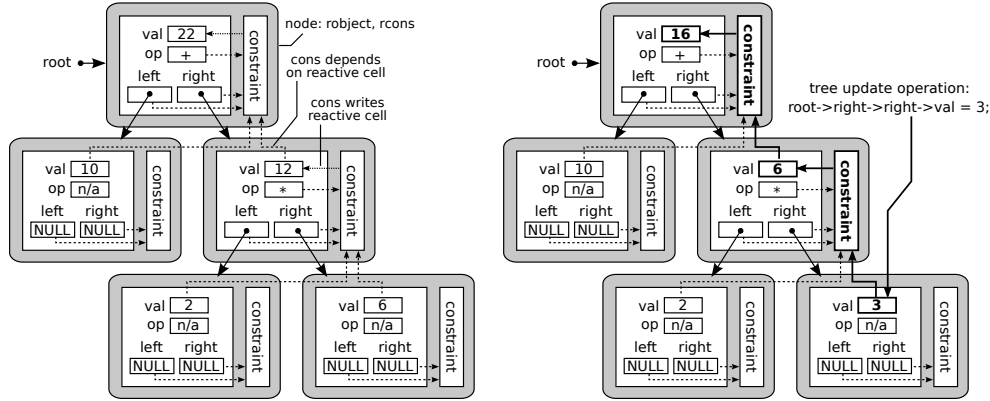}}
\bigskip
\caption{Reactive expression tree (left) and change propagation chain after a leaf value update (right).}
\label{fi:exptrees}
\end{figure*}

\noindent The example above creates the tree shown in Figure~\ref{fi:exptrees} (left). Since all fields of the node are reactive and each node is equipped with a constraint that computes its value, at any time during the tree construction, {\small\tt root->value} contains the correct result of the expression evaluation. We remark that this value not only is given for free without the need to compute it explicitly by traversing the tree, but is also updated automatically after any change of the tree. For instance, changing the value of the rightmost leaf with {\small\tt root->right->right->val = 3} triggers the propagation chain shown in Figure~\ref{fi:exptrees} (right). Other possible updates that would be automatically propagated include changing the operation type of a node or even adding/removing entire subtrees. Notice that a single change to a node may trigger the re-execution of the constraints attached to all its ancestors, so the total worst-case time per update is $O(h)$, where $h$ is the height of the tree. For a balanced expression tree, this is exponentially faster than recomputing from scratch. If a batch of changes are to be performed and only the final value of the tree is of interest, performance can be improved by grouping updates with {\small\tt begin\_at()} and {\small\tt end\_at()} so that the re-execution of constraints is deferred until the end of the batch, e.g.:

\begin{footnotesize}
\begin{verbatim}
begin_at();                // put the solver to sleep
root->op = node<int>::SUM; // change node operation type
delete root->right->left   // delete leaf
...                        // etc...
end_at();                  // wake up the solver
\end{verbatim}
\end{footnotesize}

\paragraph{Discussion.} \dc\ and imperative self-adjusting computation languages such as \ceal~\cite{DBLP:conf/pldi/AcarBLTT10} share the basic idea of change propagation, and reactive memory is very similar to \ceal's modifiables. However, the two approaches differ in a number of important aspects. In \ceal, the solution is initially computed by a core component and later updated by a {\em mutator}, which performs changes to the input. In \dc\ there is no explicit distinction between an initial run and a sequence of updates, and in particular there is no static algorithm that is automatically dynamized. Instead, programmers explicitly break down the solution to a complex problem into a collection of reactive code fragments that locally update small portions of the program state as a function of other portions.  This implies a paradigm shift that may be less straightforward for the average programmer than writing static algorithms, but it can make it easier to exploit specific properties of the problem at hand, which in some cases can be crucial for coding algorithms provably faster than recomputing from scratch.

Traceable data types~\cite{DBLP:conf/pldi/AcarBLTT10} have been recently introduced to extend the class of static algorithms that can be handled efficiently by self-adjusting computation: in a traceable data type, dependencies are tracked at the level of data structure operations, rather than individual memory locations, combining in a hybrid approach the benefits of automatic change propagation and those of {\em ad hoc} dynamic data structures. This can yield large asymptotic gains in dynamizing static algorithms that use basic abstract data types, such as dictionaries or priority queues. However, it is not clear that every conventional static algorithm can be effectively dynamized in this way: for some complex problems, it may be necessary to implement an {\em ad hoc} traceable data structure that performs all the incremental updates, thus missing the advantages of automatic change propagation and thwarting the automatic incrementalization nature of self-adjusting computation. In contrast, dataflow constraints  can explicitly deal with changes to the state of the program (when this is necessary to obtain asymptotic benefits), incorporating change-awareness directly within the code controlled by the change propagation algorithm without requiring traceable data structures.

% ---------------------------------------------------------------------------
\subsection{Implementing the Observer Design Pattern} 
\label{ss:observer-pattern}

As a second example, we show how the reactive nature of our framework can be naturally exploited to implement the observer software design pattern.
A common issue arising from partitioning a system into a collection of cooperating software modules is the need to maintain consistency between related objects. In general, a tight coupling of the involved software components is not desirable, as it would reduce their reusability. For example, graphical user interface toolkits almost invariably separate presentational aspects from the underlying application data management, allowing data processing and data presentation modules to be reused independently. The {\em observer software design pattern}~\cite{ObserverPOPL00} answers the above concerns by defining one-to-many dependencies between objects so that when one object (the \emph{subject}) changes state, all its dependents (the \emph{observers}) are automatically notified. A key aspect is that subjects send out notifications of their change of state, without having to know who their observers are, while any number of observers can be subscribed to receive these notifications (subjects and observers are therefore not tightly coupled). A widely deployed embodiment of this pattern is provided by the \qt\ application development framework~\cite{QtBook06}. 

\qt\ is based on a signal-slot communication mechanism: a signal is emitted when a particular event occurs, whereas a slot is a function that is called in response to a particular signal. An object acting as a subject emits signals in response to changes of its state by explicitly calling a special member function designated as a signal. Observers and subjects can be explicitly connected so that any signal emitted by a subject triggers the invocation of one or more observer slots. Programmers can connect as many signals as they want to a single slot, and a signal can be connected to as many slots as they need. Since the connection is set up externally after creating the objects, this approach allows objects to be unaware of the existence of each other, enhancing information encapsulation and reuse of software components. Subjects and observers can be created in \qt\ as instances of the {\tt QObject} base class. \qt's signal-slot infrastructure hinges upon an extension of the C++ language with three new keywords: {\tt signal} and {\tt slot}, to designate functions as signals or slots, and {\tt emit}, to generate signals. 

\paragraph{A Minimal Example: \qt\ vs. \dc.} To illustrate the concepts discussed above and compare \qt\ and \dc\ as tools for implementing the observer pattern, we consider a minimal example excerpted from the \qtfs\ reference documentation. The goal is to set up a program in which two counter variables {\tt a} and {\tt b} are connected together so that the value of {\tt b} is automatically kept consistent with the value of {\tt a}. The example starts with the simple declaration shown in Figure~\ref{fi:examples:counter}(a) (all except the framed box), which encapsulates the counter into an object with member functions {\tt value}/{\tt setValue} for accessing/modifying it. Figure~\ref{fi:examples:counter}(b) shows how the {\tt Counter} class can be modified in \qt\ so that counter modifications can be automatically propagated to other objects as prescribed by the observer pattern. First of all, the class inherits from \qt's {\tt QObject} base class and starts with the {\tt Q\_OBJECT} macro. Function {\tt setValue} is declared as a slot and it is augmented by calling explicitly the {\tt valueChanged} signal with the {\tt emit} keyword every time an actual change occurs. Since \qt\ {\tt Counter} objects contain both signal and slot functions they can act both as subjects and as observers. The following code snippet shows how two counters can be created and connected so that each change to the former triggers a change of the latter:

\begin{footnotesize}
\begin{verbatim}
Counter *a = new Counter, *b = new Counter;

QObject::connect(a, SIGNAL(valueChanged(int)),
                 b, SLOT(setValue(int)));

a->setValue(12); // a->value() == 12, b->value() == 12
b->setValue(48); // a->value() == 12, b->value() == 48
\end{verbatim}
\end{footnotesize}

\noindent The {\tt QObject::connect} call installs a connection between counters {\tt a} and {\tt b}: every time {\tt emit valueChanged(value)} is issued by {\tt a} with a given actual parameter, {\tt setValue(int value)} is automatically invoked on {\tt b} with the same parameter. Therefore, the call {\tt a->setValue(12)} has as a side-effect that the value of {\tt b} is also set to {\tt 12}. Conversely, the call {\tt b->setValue(48)} entails no change of {\tt a} as no connection exists from {\tt b} to {\tt a}. 

%------------------------------
\begin{figure}
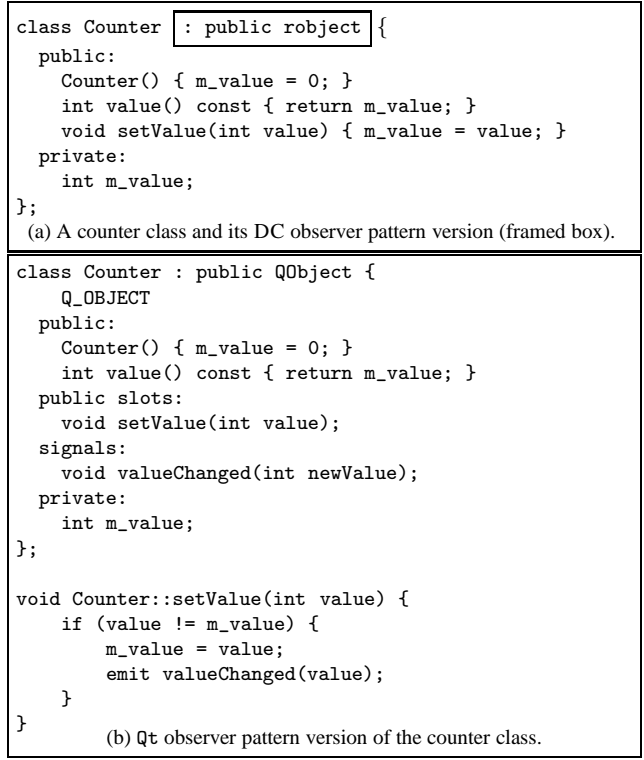

\begin{boxedminipage}{\columnwidth}
\begin{footnotesize}
\verb+class Counter +\fbox{\tt : public robject} {\tt \{} \\
\verb+  public:+\\
\verb+    Counter() { m_value = 0; }+\\
\verb+    int value() const { return m_value; }+\\
\verb+    void setValue(int value) { m_value = value; }+\\
\verb+  private:+\\
\verb+    int m_value;+\\
\verb+};+\\
\end{footnotesize}
\vspace{-1.1mm}\footnotesize{\centerline{(a) A counter class and its \dc\ observer pattern version (framed box). }}
\vspace{-1.5mm}
\end{boxedminipage}
\begin{boxedminipage}{\columnwidth}
\begin{footnotesize}
\begin{verbatim}
class Counter : public QObject {
    Q_OBJECT
  public:
    Counter() { m_value = 0; }
    int value() const { return m_value; }
  public slots:
    void setValue(int value);
  signals:
    void valueChanged(int newValue);
  private:
    int m_value;
};
 
void Counter::setValue(int value) {
    if (value != m_value) {
        m_value = value;
        emit valueChanged(value);
    }
}
\end{verbatim}
\end{footnotesize}

\vspace{-4mm} \footnotesize{\centerline{(b) \qt\ observer pattern version of the counter class.}}
\end{boxedminipage}
\nocaptionrule\caption{Observer pattern example excerpted from the \qtfs\ reference documentation: \dc\ vs. \qt\ implementation.}
\label{fi:examples:counter}
\end{figure}
%------------------------------

The same result can be achieved in \dc\ by just letting the {\tt Counter} class of Figure~\ref{fi:examples:counter}(a) inherit from the {\tt robject} base class of Figure~\ref{fi:examples:dcpp}. As a result, the {\tt m\_value} member variable is stored in reactive memory. The prescribed connection between reactive counters can be enforced with a one-way dataflow constraint that simply assigns the value of {\tt b} equal to the value of {\tt a}:

\begin{footnotesize}
\begin{verbatim}
Counter *a = new Counter, *b = new Counter;

struct C : rcons {
    Counter *a, *b;
    C(Counter *a, Counter *b) : a(a), b(b) { enable(); }
    void cons() { b->setValue(a->value()); }
} c(a,b);

a->setValue(12); // a->value() == 12, b->value() == 12
b->setValue(48); // a->value() == 12, b->value() == 48
\end{verbatim}
\end{footnotesize}

\noindent We notice that the role of the {\tt QObject::connect} of the \qt\ implementation is now played by a dataflow constraint, yielding exactly the same program behavior.

\paragraph{Discussion.} The example above shows that \dc's runtime system handles automatically a number of aspects that would have to be set up explicitly by the programmers using \qt's mechanism:
\begin{itemize}
\item there is no need to define slots and signals, relieving programmers from the burden of extending the definition of subject and observer classes with extra machinery (see Figure~\ref{fi:examples:counter});
\item only actual changes of an object's state trigger propagation events, so programmers do not have to make explicit checks such as in {\tt Counter::setValue}'s definition to prevent infinite looping in the case of cyclic connections (see Figure~\ref{fi:examples:counter}(b));
\item \dc\ does not require extensions of the language, and thus the code does not have to be preprocessed before being compiled.
\end{itemize}

\noindent We sketch below further points that make dataflow constraints a flexible framework for supporting some aspects of component programming, putting it into the perspective of mainstream embodiments of the observer pattern such as \qt:

\begin{itemize}
\item in \dc, only subjects need to be reactive, while observers can be of any C++ class, even of third-party libraries distributed in binary code form. In \qt, third-party observers must be wrapped using classes equipped with slots that act as stubs;
\item relations between \qt\ objects are specified by creating explicitly one-to-one signal-slot connections one at a time; a single \dc\ constraint can enforce simultaneously any arbitrary set of many-to-many relations. Furthermore, as the input variables of a dataflow constraint are detected automatically, relations may change dynamically depending on the state of some objects;
\item \qt\ signal-slot connections let subjects communicate values to their observers; \dc\ constraints can compute the values received by the observers as an arbitrary function of the state of multiple subjects, encapsulating complex update semantics;
\item in \qt, an object's state change notification can be deferred by emitting a signal until after a series of state changes has been made, thereby avoiding needless intermediate updates; \dc\ programmers can control the granularity of change propagations by temporarily disabling constraints and/or by using {\tt begin\_at}/{\tt end\_at} primitives.
\end{itemize}

\begin{figure}
\begin{boxedminipage}{\columnwidth}
\begin{scriptsize}
\begin{verbatim}
 0 template<class T, class N> class snode : public rcons {
 1     map<N**, snode<T,N>*> *m;
 2     N     *head, **tail;
 3     snode *next;
 4     int    refc;
 5   public:
 6     snode(N *h, N **t, map<N**, snode<T,N>*> *m) :
 7         m(m), head(h), tail(t), next(NULL), refc(0) { 
 8         (*m)[tail] = this; 
 9         enable(); 
10     }
11    ~snode() { 
12         m->erase(tail); 
13         if (next != NULL && --next->refc == 0) delete next;
14     }
15     void cons() {
16         snode<T,N>* cur_next;
17
18         if (*tail != NULL) {
19             typename map<N**, snode<T,N>*>::iterator it =
20                 m->find( &(*tail)->next );
21             if (it != m->end()) 
22                  cur_next = it->second;
23             else cur_next = new snode<T,N>(*tail, 
24                                          &(*tail)->next, m);
25         } else cur_next = NULL;
26
27         if (next != cur_next) {
28             if (next != NULL && --next->refc == 0)
29                 next->arm_final();
30             if (cur_next != NULL && cur_next->refc++ == 0)
31                 cur_next->unarm_final();
32             next = cur_next;
33         }
34         if (head != NULL) T::watch(head);
35     }
36     void final() { delete this; }
37 };

38 template<class T, class N> class watcher {
39     snode<T,N> *gen;
40     map<N**, snode<T,N>*> m;
41   public:
42     watcher(N** h) { gen = new snode<T,N>(NULL, h, &m); }
43    ~watcher()      { delete gen; }
44 };
\end{verbatim}
\end{scriptsize}
\end{boxedminipage}
\nocaptionrule\caption{Data structure checking and repair: {\em list watcher}.}
\label{fi:examples:watcher}
\end{figure}

% ---------------------------------------------------------------------------
\subsection{Data Structure Checking and Repair} 
\label{ss:repair}

Long-living applications inevitably experience various forms of damage, often due to bugs in the program, which could lead to system crashes or wrong computational results. The ability of a program to perform automatic consistency checks and self-healing operations can greatly improve reliability in software development. One of the most common causes of faults is connected with different kinds of data structure corruptions, which can be mitigated using data structure repair techniques~\cite{Demsky:2003:ADR:949305.949314}. 

In this section, we show how dataflow constraints can be used to check and repair reactive data structures. We exemplify this concept by considering the simple problem of repairing a corrupt doubly-linked list~\cite{Malik:2009:CAD:1747491.1747568}. We first show how to build a generic {\em list watcher}, which is able to detect any changes to a list and perform actions when modifications occur. This provides an advanced example of \dc\ programming, where constraints are created and destroyed by other constraints. Differently from the expression trees of Section~\ref{ss:incremental}, where constraints are attributes of nodes, the main challenge here is how to let the watched list be completely unaware of the watcher, while still maintaining automatically a constraint for each node. The complete code of the watcher is shown in Figure~\ref{fi:examples:watcher}. The only assumpion our watcher makes on list nodes to be monitored (of generic type {\small\tt N}) is that they are reactive and contain a {\small\tt next} field pointing to the successor. The main idea is to maintain a {\em shadow list} of constraints that mirrors the watched list (Figure~\ref{fi:watcher}).  Shadow nodes are {\small\tt snode} objects containing pointers to the monitored nodes ({\small\tt head}) and to their next fields ({\small\tt tail}). A special generator shadow node ({\small\tt gen}) is associated to the reactive variable ({\small\tt list}) holding the pointer to the first node of the input list. A lookup table ({\small\tt m}) maintains a mapping from list nodes to the corresponding shadow nodes. The heart of the watcher is the constraint associated with shadow nodes (lines 15--35). It first checks if the successor of the monitored node, if any, is already mapped to a shadow node (lines 18--21). If not, it creates a new shadow node (line 23). Lines 27--33 handle the case where the successor of the shadow node has changed and its {\small\tt next} field has to be updated. Line 34 calls a user-defined {\small\tt watch} function (provided by template parameter {\small\tt T}), which performs any desired checks and repairs for an input list node. To dispose of shadow nodes when the corresponding nodes are disconnected from the list, we use a simple reference counting technique, deferring to a final handler the task of deallocating dead shadow nodes (line 36). 

The following code snippet shows how to create a simple repairer for a doubly-linked list based on the watcher of Figure~\ref{fi:examples:watcher}:

\begin{footnotesize}
\begin{verbatim}
struct node : robject { int val; node *next, *prev; };

struct myrepairer {
    static void watch(node* x) {
        // check
        if (x->next != NULL && x != x->next->prev) 
            // repair
            x->next->prev = x;                     
    }
};

// create reactive list head and repairer
node** list = ...;          
watcher<myrepairer,node> rep(list); 
// manipulate the list
...                                 
\end{verbatim}
\end{footnotesize}

\noindent The repairer object {\small\tt rep} checks if the invariant  property {\small\tt x == x->next->prev} is satisfied for all nodes in the list, and recovers it to a consistent state if any violation is detected during the execution of the program. We notice that several different watchers may be created to monitor the same list.

\begin{figure}
\centerline{\includegraphics[width=8.3cm]{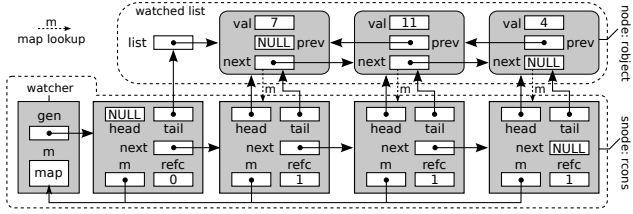}}
\bigskip
\caption{A reactive doubly-linked list, monitored by a {\em watcher}.}
\label{fi:watcher}
\end{figure}

%====================================================================
\section{Implementation}
\label{se:implementation}

In this section we discuss how \dc\ can be implemented via a combination of runtime libraries, hardware/operating system support, and dynamic code patching, without requiring any source code preprocessing. The overall architecture of our \dc\ implementation, which was developed on a Linux IA-32 platform, is shown in Figure~\ref{fi:architecture}. At a very high level, the \dc\ runtime library is stratified into two modules: 1) a {\em reactive memory manager}, which defines the {\tt rmalloc} and {\tt rfree} primitives and provides support for tracing accesses to reactive memory locations; 2) a {\em constraint solver}, which schedules and dispatches the execution of constraints, keeping track of dependencies between reactive memory locations and constraints. We start our description by discussing how to support reactive memory, which is the backbone of the whole architecture.

\begin{figure}
\centerline{\includegraphics[width=8.4cm]{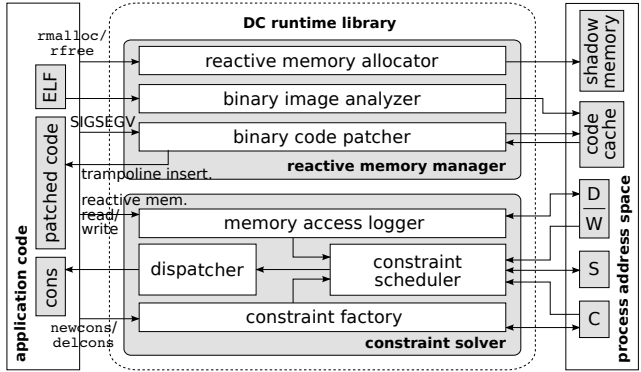}}
\bigskip
\caption{\dc's software architecture.}
\label{fi:architecture}
\end{figure}

\subsection{Reactive Memory}
\label{ss:reactive-implem}

Taking inspiration from transactional memories~\cite{1504203}, we implemented reactive memory using off-the-shelf memory protection hardware. Our key technique uses access violations (AV) combined with dynamic binary code patching as a basic mechanism to trace read/write operations to reactive memory locations.

\paragraph{Access Violations and Dynamic Code Patching.} Reactive memory is kept in a {\em protected} region of the address space so that any read/write access to a reactive object raises an AV. Since access violation handling is very inefficient, we use it just to detect incrementally instructions that access reactive memory. When an instruction $x$ first tries to access a reactive location, a segmentation fault with offending instruction $x$ is raised. In the SIGSEGV handler, we patch the trace $t$ containing $x$ by overwriting its initial 5 bytes with a jump to a dynamically recompiled trace $t'$ derived from $t$, which is placed in a code cache. In trace $t'$, $x$ is instrumented with additional inline code that accesses reactive locations without generating AVs, and possibly activates the constraint solver. Trace $t'$ ends with a jump that leads back to the end of $t$ so that control flow can continue normally in the original code. Since $t'$ may contain several memory access instructions, it is re-generated every time a new instruction that accesses reactive memory is discovered. To identify traces in the code, we analyze statically the binary code when it is loaded 
\begin{wrapfigure}{r}{1cm}
\hspace{-4.1mm}\includegraphics[width=1.3cm]{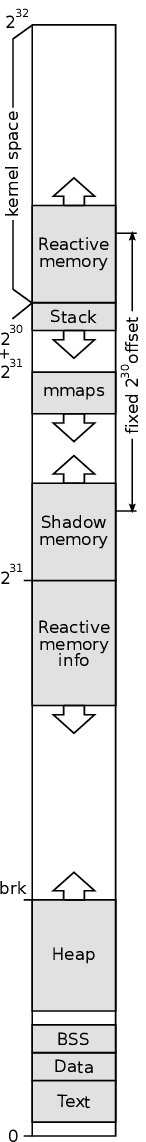}
\vspace{-3mm}
\end{wrapfigure} 
and we construct a lookup table that maps the address of each memory access instruction to the trace containing it. To handle the cases where a trace in a function $f$ is smaller than 5 bytes and thus cannot be patched, we overwrite the beginning of $f$ with a jump to a new version $f'$ of $f$ where traces are padded with trailing {\small\tt nop} instructions so that the smallest trace is at least 5-bytes long.

\paragraph{Shadow Memory and Address Redirecting.} To avoid expensive un-protect and re-protect page operations at each access to reactive memory, we mirror reactive memory pages with unprotected {\em shadow pages} that contain the actual data. The shadow memory region is kept under the control of our reactive memory allocator, which maps it onto physical frames with the {\tt mmap} system call. Any access to a reactive object is transparently redirected to the corresponding object in the shadow memory. As a result, memory locations at addresses within the reactive memory region are never actually read or written by the program. To avoid wasting memory without actually accessing it, reactive memory can be placed within the Kernel space, located in the upper 1GB of the address space on 32-bit Linux machines with the classical 3/1 virtual address split. Kernel space is flagged in the page tables as exclusive to privileged code (ring 2 or lower), thus an AV is triggered if a user-mode instruction tries to touch it. More recent 64-bit platforms offer even more flexibility to accomodate reactive memory in protected regions of the address space. We let the reactive memory region start at address $2^{30}+2^{31}=$ 0xC000000 and grow upward as more space is needed (see the figure on the right). The shadow memory region starts at address $2^{31}=$ 0x8000000 and grows upward, eventually hitting the memory mapping segment used by Linux to keep dynamic libraries, anonymous mappings, etc. Any reactive object at address $x$ is mirrored by a shadow object at address $x-\delta$, where $\delta=2^{30}=$ 0x4000000 is a fixed offset. This makes address redirecting very efficient.

\subsection{Constraint Solver} 
Our implementation aggregates reactive locations in 4-byte {\em words} aligned at 32 bit boundaries. The solver is activated every time such a word is read in constraint execution mode, or its value is modified by a write operation. The main involved units are (see Figure~\ref{fi:architecture}): 

\begin{enumerate}
\item A {\em dispatcher} that executes constraints, maintaining a global {\em timestamp} that grows by one at each constraint execution. For each constraint, we keep the timestamp of its latest execution.

\item A {\em memory access logger} that maintains the set of dependencies $D$ and a list $W$ of all reactive memory words written by the execution of the current constraint $c_{self}$, along with their initial values
before the execution. To avoid logging information about the same word multiple times during the execution of a constraint, the logger stamps each word with the time of the latest constraint execution that accessed it. Information is logged only if the accessed word has a timestamp older than the current global timestamp, which can only happen once for any constraint execution. To represent $D$, the logger keeps for each word $v$ the address of the head node of a linked list containing the id's of constraints depending upon $v$. 

\item A {\em constraint scheduler} that maintains the set of scheduled constraints $S$. By default $S$ is a priority queue, where the priority of a constraint is given by the timestamp of its latest execution: the scheduler repeatedly picks and lets the dispatcher execute the constraint with the highest priority, until $S$ gets empty. Upon completion of a constraint's execution, words are scanned and removed from $W$: for each $v\in W$ whose value has changed since the beginning of the execution, the constraint id's in the list of nodes associated with $v$ are added to $S$, if not already there.
\end{enumerate}

\noindent Nodes of the linked lists that represent $D$ and data structures $S$ and $W$ are kept in contiguous chunks allocated with {\tt malloc}. To support direct lookup, timestamps and dependency list heads for reactive memory words are stored in a contiguous {\em reactive memory info} region that starts at address $2^{31}=$ 0x8000000 and grows downward, eventually hitting the heap's {\tt brk}. 

A critical aspect is how to clean up old dependencies in $D$ when a constraint is re-evaluated. To solve the problem efficiently in constant amortized time per list operation, we keep for each node its insertion time into the linked list. We say that a node is {\em stale} if its timestamp is older than the timestamp of the constraint it refers to, and {\em up to date} otherwise. Our solver uses a lazy approach and disposes of stale nodes only when the word they refer to is modified and the linked list is traversed to add constraints to $S$. To prevent the number of stale nodes from growing too large, we use an incremental garbage collection technique.

%====================================================================
\section{Experimental Evaluation}
\label{se:experiments}

In this section we present an experimental analysis of the performances of \dc\ in a variety of different settings, showing that our implementation is effective in practice. 

\begin{table*}
\begin{footnotesize}
\begin{center}
\begin{tabular}{|c||c|c|c|c|c|c||c|c|c||c|c|c||c|c|c|}\hline
\multicolumn{1}{|c||}{} &
\multicolumn{6}{|c||}{From-scratch time} &
\multicolumn{3}{|c||}{Propagation time} &
\multicolumn{3}{|c||}{Mem peak usage} &
\multicolumn{3}{|c|}{DC statistics} \\
\multicolumn{1}{|c||}{} &
\multicolumn{6}{|c||}{(secs)} &
\multicolumn{3}{|c||}{(msecs)} &
\multicolumn{3}{|c||}{(Mbytes)} &
\multicolumn{3}{|c|}{} \\\hline
Benchmark & conv & dc & ceal & $\frac{\textrm{dc}}{\textrm{conv}}$ & $\frac{\textrm{ceal}}{\textrm{conv}}$ & $\frac{\textrm{ceal}}{\textrm{dc}}$ & dc & ceal & $\frac{\textrm{ceal}}{\textrm{dc}}$ & dc & ceal & $\frac{\textrm{ceal}}{\textrm{dc}}$ & $\textrm{avg cons} \atop \textrm{per update}$ & $\textrm{instr} \atop \textrm{time}$ & $\textrm{patched} \atop \textrm{instr}$ \\\hline\hline

\texttt{adder} & 0.10 & 1.44 & 1.40 & 14.40 & 14.00 & 0.97 & 0.68 & 85.80 & 126.17 & 211.54 & 232.87 & 1.10 & 1.5 & 0.030 & 26 \\\hline
\texttt{exptrees} & 0.14 & 1.02 & 1.07 & 7.28 & 7.64 & 1.04 & 4.11 & 5.46 & 1.32 & 143.30 & 225.32 & 1.57 & 15.6 & 0.028 & 72 \\\hline
\texttt{filter} & 0.19 & 2.08 & 1.11 & 10.94 & 5.84 & 0.53 & 0.63 & 2.49 & 3.95 & 265.78 & 189.47 & 0.71 & 0.5 & 0.032 & 39 \\\hline
\texttt{halver} & 0.20 & 2.08 & 1.33 & 10.40 & 6.65 & 0.63 & 0.61 & 3.95 & 6.47 & 269.10 & 218.22 & 0.81 & 0.5 & 0.030 & 38 \\\hline
\texttt{mapper} & 0.19 & 2.04 & 1.30 & 10.73 & 6.84 & 0.63 & 0.61 & 2.63 & 4.31 & 261.53 & 214.34 & 0.81 & 0.5 & 0.032 & 39 \\\hline
\texttt{merger} & 0.19 & 2.12 & 1.37 & 11.15 & 7.21 & 0.64 & 0.66 & 4.43 & 6.71 & 284.41 & 218.21 & 0.81 & 0.5 & 0.031 & 57 \\\hline
\texttt{msorter} & 0.91 & 5.18 & 3.91 & 5.69 & 4.29 & 0.75 & 5.55 & 15.91 & 2.86 & 689.59 & 820.14 & 1.18 & 37.6 & 0.031 & 75 \\\hline
\texttt{reverser} & 0.18 & 2.04 & 1.30 & 11.33 & 7.22 & 0.63 & 0.62 & 2.63 & 4.24 & 267.45 & 214.34 & 0.80 & 0.5 & 0.030 & 37 \\\hline
\texttt{splitter} & 0.18 & 2.27 & 1.31 & 12.61 & 7.27 & 0.57 & 1.54 & 3.92 & 2.54 & 344.60 & 222.34 & 0.64 & 1.5 & 0.031 & 56 \\\hline
\end{tabular}
\end{center}
\end{footnotesize}
\caption{Performance evaluation of \dc\ versus \ceal, for a common set of benchmarks. Input size is $n=1,000,000$ for all tests except \texttt{msorter}, for which $n=100,000$.}
\label{ta:dc-ceal-comparison}
\end{table*}

\begin{figure*}
{\centerline{
\begin{tabular}{ccc}
\includegraphics[width=6.0cm]{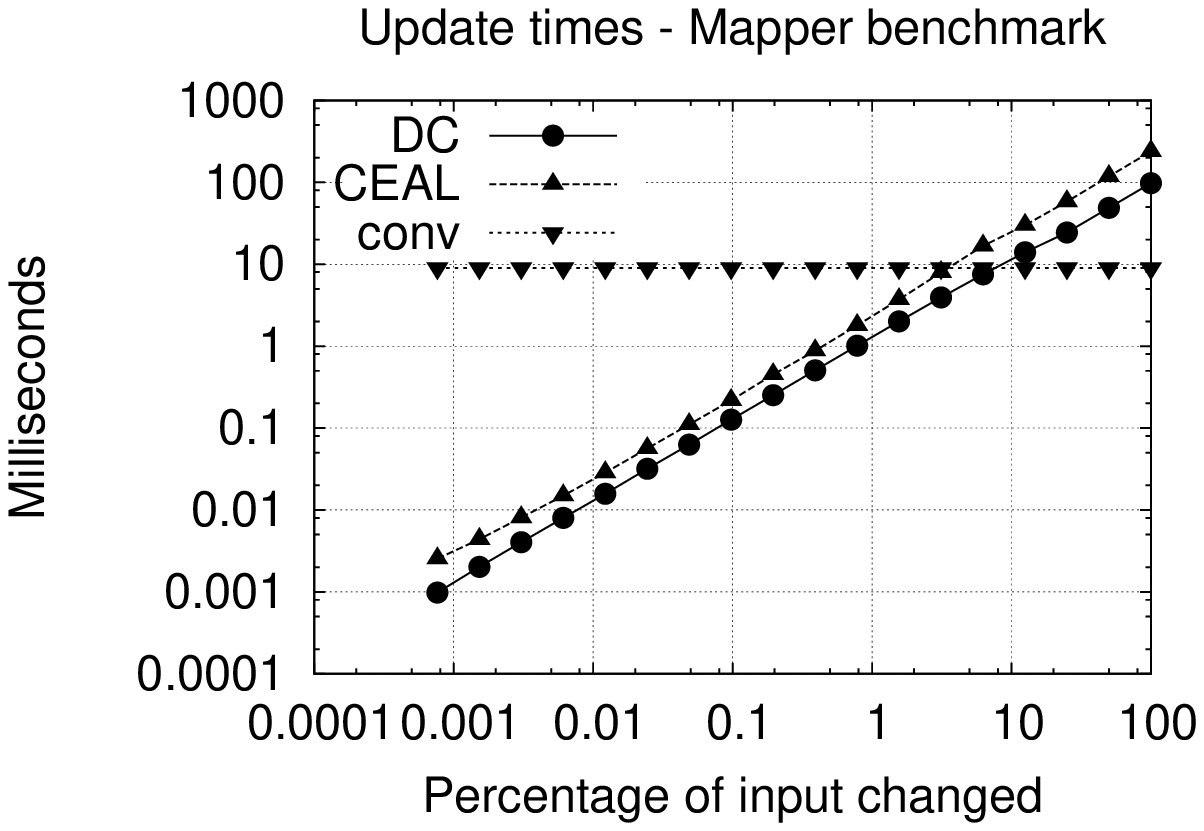} &
\includegraphics[width=6.0cm]{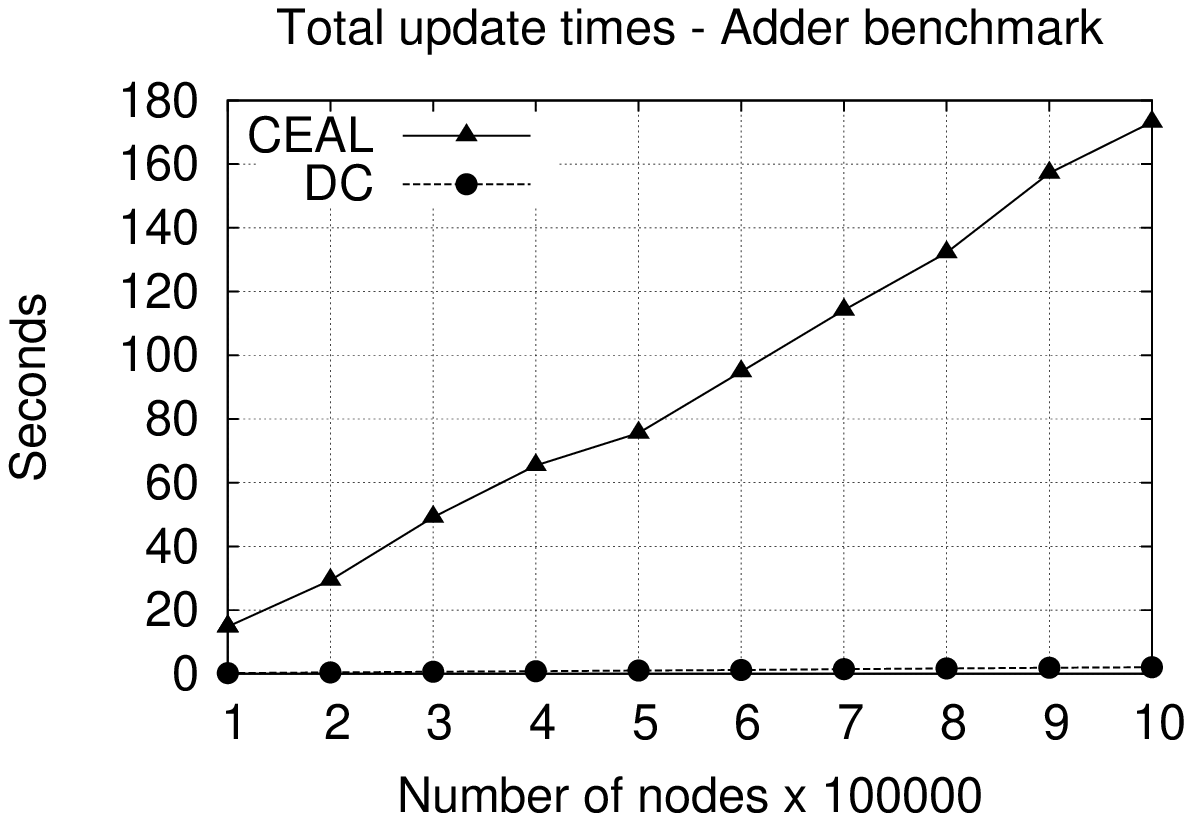} &
\includegraphics[width=6.0cm]{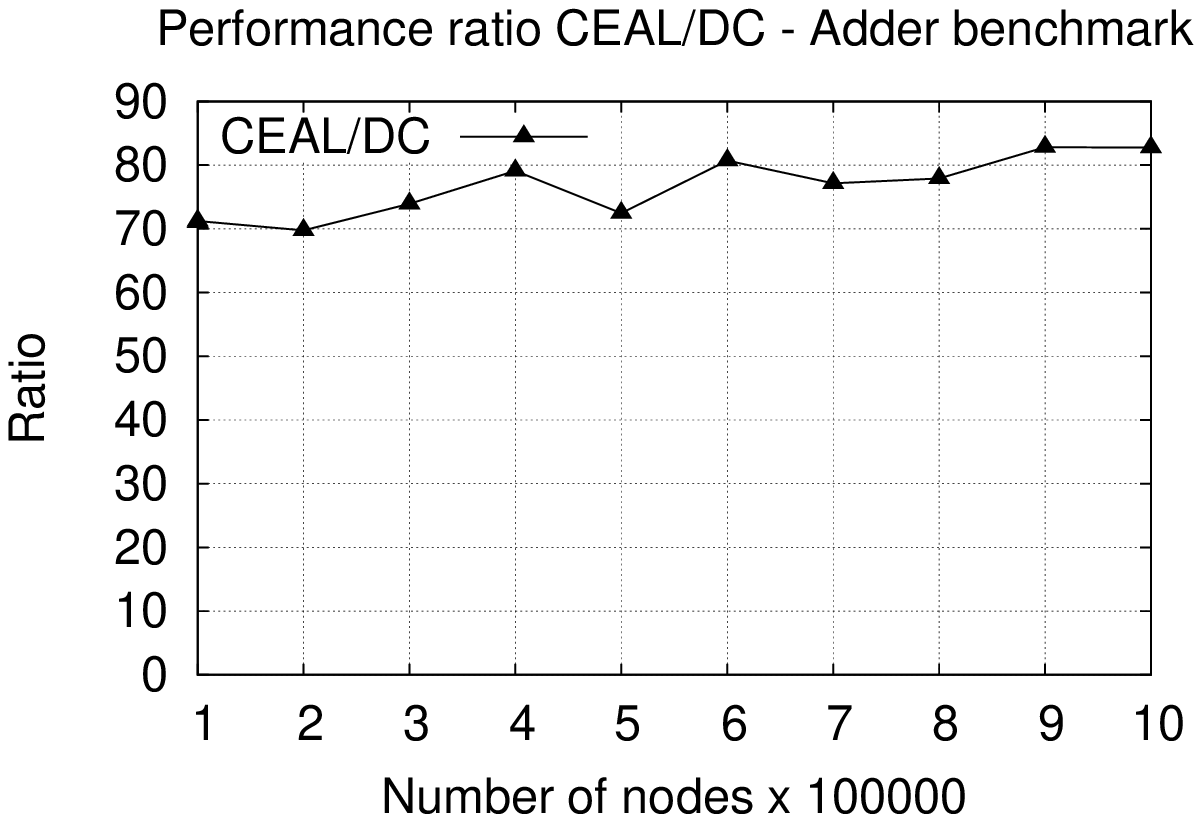}\\
(a) & (b) & (c)\\
\end{tabular}
}}
\smallskip
\caption{(a) Change propagation times on the {\tt mapper} benchmark for complex updates with input size $n=100,000$; (b-c) performance comparison of the  change propagation times of \dc\ and \ceal\ on the {\tt adder} benchmark.}
\label{fi:ceal}
\end{figure*}

%--------------------------------------------------------------------------------
\subsection{Benchmark Suite} 

We have evaluated \dc\ on a set of benchmarks that includes a variety of problems on lists, grids, trees, matrices, and graphs, as well as full and event-intensive interactive applications.

\begin{itemize}

\item {\em Linked Lists.} We considered several fundamental primitives on linear linked data structures, which provide a variety of data manipulation patterns. Our benchmarks include data structures for: computing the sum of the elements in a list ({\tt adder}), 
filtering the items of a list according to a given function ({\tt filter}), 
randomly assigning each element of a list to one of two output lists ({\tt halver}), 
mapping the items of a list onto new values according to a given mapping function ({\tt mapper}), 
merging two sorted lists into a single sorted output list ({\tt merger}),
producing a sorted version of an input list ({\tt msorter}), 
producing a reversed version of an input list ({\tt reverser}); 
splitting a list into two output lists, each containing only elements smaller or, respectively, greater than a given pivot ({\tt splitter}). All benchmarks are subject to operations that add or remove nodes from the input lists.

\item {\em Graphs and Trees.} Benchmarks in this class include classical algorithmic problems for routing in networks and tree computations:
\begin{itemize}
\item {\tt sp}: given a weighted directed graph and a source node $s$, computes the distances of all graph nodes from $s$. Graph edges are subject to edge weight decreases (see Section~\ref{ss:dynamic-sssp}).
\item {\tt exptrees}: computes the value of an expression tree subject to operations that change leaf values or operators computed by internal nodes (see Section~\ref{ss:incremental}).
\end{itemize}

\item {\em Linear Algebra.} We considered number-crunching problems on vectors and matrices, including the product of a vector and a matrix ({\tt vecmat}) and matrix multiplication ({\tt matmat}), subject to different kinds of updates of single cells as well as of entire rows or columns.

\item {\em Interactive Applications.} We considered both full real applications and synthetic worst-case scenarios, including:
\begin{itemize}
\item {\tt othello}: full application that implements the well-known board game in which two players in turn place colored pieces on a square board, with the goal of reversing as many of their opponent's pieces as possible;
\item {\tt buttongrid}: event-intensive graphic user interface application with a window containing $n\times n$ push buttons embedded in a grid layout. This is an extreme artificial scenario in which many events are generated, since a quadratic number of buttons need to be resized and repositioned to maintain the prescribed layout at each interactive resize event.
\end{itemize}

\end{itemize}

\noindent Some benchmarks, such as {\tt matmat} and {\tt sp}, are very computationally demanding. For all these benchmarks we have considered an implementation based on \dc, obtained by making the base data structures (e.g., the input list) reactive, and a conventional implementation in C based on non-reactive data structures. Interactive applications ({\tt othello} and {\tt buttongrid}) are written in the \qt-4 framework: change propagation throughout the GUI is implemented either using constraints (\dc\ versions), or using the standard signal-slot mechanism provided by \qt\ (conventional versions). To assess the performances of \dc\ against competitors that can quickly respond to input changes,  we have also considered highly tuned {\em ad-hoc} dynamic algorithms~\cite{Demetrescu+2001a, Ramalingam1996267} and incremental solutions realized in \ceal~\cite{HammerAC09}, a state-of-the-art C-based language for self-adjusting computation. Benchmarks in common with \ceal\ are {\tt adder},  {\tt exptrees},  {\tt filter}, {\tt halver}, {\tt mapper}, {\tt merger}, {\tt msorter}, {\tt reverser}, and {\tt splitter}. For these benchmarks, we have used the optimized implementations provided by Hammer {\em et al.}~\cite{HammerAC09}.

%--------------------------------------------------------------------------------
\subsection{Performance Metrics and Experimental Setup} 

We tested our benchmarks both on synthetic and on real test sets, considering a variety of performance metrics:

\begin{itemize}

\item {\em Running times:} we measured the time required to initialize the data structures with the input data (from-scratch execution), the time required by change propagation, and binary code instrumentation time. All reported times are wall-clock times, averaged over three independent trials. Times were measured with {\tt gettimeofday()}, turning off any other processes running in the background. 

\item {\em Memory usage:} we computed the memory peak usage as well as a detailed breakdown to assess which components of our implementation take up most memory (constraints, shadow memory, reactive memory, stale and non-stale dependencies, etc.).

\item {\em DC-related statistics:} we collected detailed profiling information including counts of patched instructions, stale dependencies cleanups, allocated/deallocated reactive blocks, created/deleted constraints, constraints executed per update, and distinct constraints executed per update.

\end{itemize}

\noindent All \dc\ programs considered in this section, except for {\tt sp} that will be discussed separetely, use the default timestamp-based comparator for constraint scheduling. 

\paragraph{Experimental Platform.} The experiments were performed on a PC equipped with a 2.10 GHz Intel Core 2 Duo with 3 GB of RAM, running Linux Mandriva 2010.1 with Qt 4.6. All programs were compiled with {\tt gcc} 4.4.3 and optimization flag {\tt -O3}.  

\begin{table*}
\begin{footnotesize}
\begin{center}
\begin{tabular}{|c|c|c||c||c|c||c|c||c|c|c||c|c|}\hline
\multicolumn{3}{|c||}{Road network} &
\multicolumn{1}{|c||}{From-scratch} &
\multicolumn{2}{|c||}{Propagation} &
\multicolumn{2}{|c||}{Speedup} &
\multicolumn{3}{|c||}{Mem peak usage} &
\multicolumn{2}{|c|}{Statistics} \\
\multicolumn{3}{|c||}{} &
\multicolumn{1}{|c||}{time (msec)} &
\multicolumn{2}{|c||}{time (msec)} &
\multicolumn{2}{|c||}{} &
\multicolumn{3}{|c||}{(Mbytes)} &
\multicolumn{2}{|c|}{}\\\hline
{Graph} & $n\cdot 10^3$ & $m\cdot 10^3$ & {sq} & {sp} & {rr} & $\frac{\textrm{sq}}{\textrm{sp}}$ & $\frac{\textrm{sq}}{\textrm{rr}}$ & {sp} & {rr} & {sq} & $\textrm{sp cons} \atop \textrm{per update}$ & $\textrm{rr node scans} \atop \textrm{per update}$\\
\hline\hline
NY & $264$ & $733$ & $50.99$ & $0.16$ & $0.07$ & $318.6$ & $728.4$ & $76.75$ & $26.62$ & $26.19$ & $143.9$ & $143.9$\\
\hline
BAY & $321$ & $800$ & $59.99$ & $0.15$ & $0.07$ & $399.9$ & $857.0$ & $84.84$ & $30.21$ & $29.82$ & $170.6$ & $170.5$\\
\hline
COL & $435$ & $1,057$ & $79.98$ & $0.28$ & $0.17$ & $285.6$ & $470.4$ & $108.61$ & $39.09$ & $38.97$ & $378.3$ & $378.2$\\
\hline
FLA & $1,070$ & $2,712$ & $192.97$ & $0.63$ & $0.35$ & $306.3$ & $551.3$ & $251.26$ & $93.42$ & $93.29$ & $687.5$ & $687.3$\\
\hline
NW & $1,207$ & $2,840$ & $236.96$ & $0.87$ & $0.54$ & $272.3$ & $438.8$ & $270.66$ & $102.15$ & $101.53$ & $1002.4$ & $1002.3$\\
\hline
NE & $1,524$ & $3,897$ & $354.94$ & $0.27$ & $0.16$ & $1314.5$ & $2218.3$ & $350.86$ & $132.85$ & $132.15$ & $320.2$ & $320.1$\\
\hline
\end{tabular}
\end{center}
\end{footnotesize}
\caption{Performance evaluation of \dc\ for incremental routing in US road
networks using up to 1.5 million constraints.}
\label{ta:sp-experiments}
\end{table*}

\begin{figure}
\centerline{
\includegraphics[width=8.8cm]{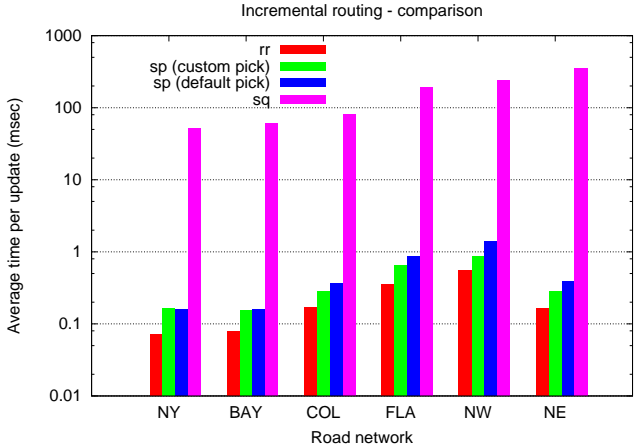}
}
\bigskip
\caption{Analysis of different {\tt pick} function definitions on the incremental routing problem.}
\label{fi:sssp_comparison}
\end{figure}

%--------------------------------------------------------------------------------
\subsection{Incremental Computation}
\label{ss:incremental-experiments}

As observed in Section~\ref{ss:dynamic-sssp}, the reactive nature of our mixed imperative/dataflow framework makes it a natural ground for incremental computation. In this section, we present experimental evidence that a constraint-based solution in our framework can respond to input updates very efficiently. We first show that the propagation times are comparable to state of the art automatic change propagation frameworks, such as \ceal~\cite{HammerAC09}, and for some problems can be orders of magnitude faster than recomputing from scratch. We then consider a routing problem on real road networks, and compare our \dc-based solution both to a conventional implementation and to a highly optimized \emph{ad hoc} dynamic algorithm supporting specific update operations.

\paragraph{Comparison to \ceal.} Table~\ref{ta:dc-ceal-comparison} summarizes the outcomes of our experimental comparison with the conventional version and with \ceal\ for all common benchmarks. Input size is $n=1,000,000$ for all tests (with the exception of {\tt msorter}, for which $n=100,000$), where $n$ is the length of the input list for the list-based benchmarks, and the number of nodes in the (balanced) input tree for {\tt exptrees}. Table~\ref{ta:dc-ceal-comparison}  reports  from-scratch execution times of both \dc\ and \ceal\ (compared to the corresponding conventional implementations), average propagation times in response to small changes of the input, memory usage and some \dc\ stats (average number of executed constraints per update, executable instrumentation time, and total number of patched instructions). The experiments show that our \dc\ implementation performs remarkably well. From-scratch times are on average a factor of $1.4$ higher than those of \ceal, while propagation times are smaller by a factor of 4 on average for all tests considered except the {\tt adder}, yielding large speed-ups over complete recalculation. In the case of the {\tt adder} benchmark, \dc\ leads by a huge margin in terms of propagation time (see Figure~\ref{fi:ceal}a and Figure~\ref{fi:ceal}b), which can be attributed to the different asymptotic performance of the algorithms handling the change propagation (constant for \dc, and logarithmic in the input size for the list reduction approach used by \ceal). We remark that the logarithmic bound of self-adjusting computation could be reduced to constant by using a traceable accumulator, as observed in Section~\ref{ss:incremental} (however, support for traceable data structures is not yet integrated in \ceal).

We also investigated how \dc\ and \ceal\ scale in the case of batches of updates that change multiple input items simultaneously. The results are reported in Figure~\ref{fi:ceal}a for the representative {\tt mapper} benchmark, showing that the selective recalculations performed by \dc\ and \ceal\ are faster than recomputing from scratch for changes up to significant percentages of the input.

\begin{figure*}
{\centerline{
\begin{tabular}{cccc}
\raisebox{23mm}{(a)} & \includegraphics[width=6.5cm]{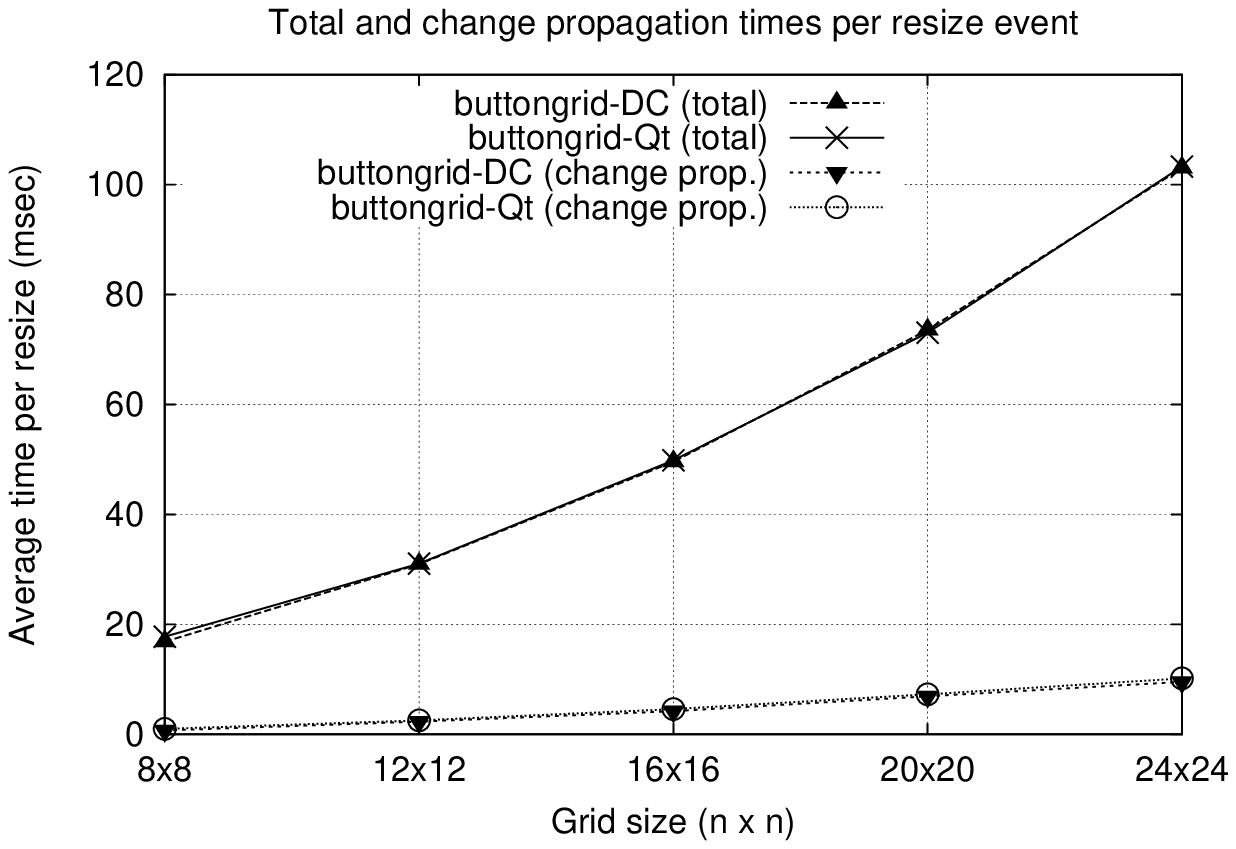}\hspace{5mm} &
\raisebox{23mm}{(b)} & \includegraphics[width=6.5cm]{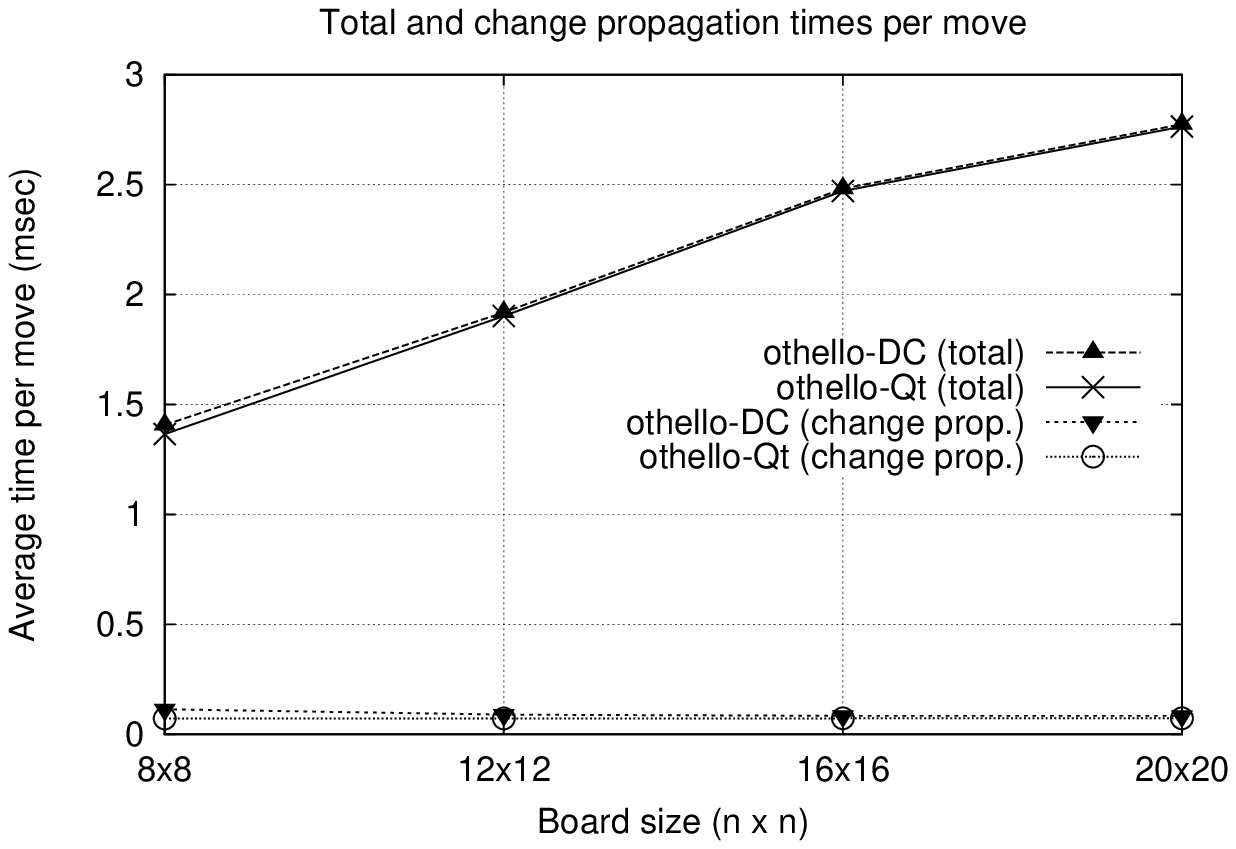}
\end{tabular}
}}
\smallskip
\caption{Comparison with signal-slot mechanism in \qt: (a) {\tt buttongrid}; (b) {\tt othello}.}
\label{fi:interactive}
\end{figure*}

\paragraph{Comparison to ad hoc Incremental Shortest Paths.} We now consider an application of the shortest path algorithm discussed in Section~\ref{ss:dynamic-sssp} to incremental routing in road networks. We assess the empirical performance of a constraint-based solution implemented in \dc\ ({\tt sp}) by comparing it with Goldberg's {\em smart queue} implementation of Dijkstra's algorithm ({\tt sq}), a highly-optimized C++ code used as the reference benchmark in the 9th DIMACS Implementation Challenge~\cite{ae-327-DGJ2009}, and with an engineered version of the \emph{ad hoc} incremental algorithm by Ramalingam and Reps ({\tt rr})~\cite{Demetrescu+2001a, Ramalingam1996267}.  Our code supports update operations following the high-level description given in Figure~\ref{fi:incremental-sp}, except that we create one constraint per node, rather than one constraint per edge. We used as input data a suite of US road networks of size up to 1.5 million nodes and 3.8 million edges derived from the UA Census 2000 TIGER/Line Files~\cite{Tiger02}. Edge weights are large and represent integer positive travel times. We performed on each graph a sequence of $m/10$ random edge weight decreases, obtained by picking edges uniformly at random and reducing their weights by a factor of 2. Updates that did not change any distances were not counted. 

The results of our experiments are shown in Table~\ref{ta:sp-experiments} and Figure~\ref{fi:sssp_comparison}. Both {\tt sp} and {\tt rr} were initialized with distances computed using {\tt sq}, hence we report from-scratch time only for this algorithm. Due to the nature of the problem, the average number of node distances affected by an update is rather small and almost independent of the size of the graph. Analogously to the incremental algorithm of Ramalingam and Reps, the automatic change propagation strategy used by our solver takes full advantage of this strong locality, re-evaluating only affected constraints and delivering substantial speedups over static solutions in typical scenarios. Our \dc-based implementation yields propagation times that are, on average, a factor of $1.85$ higher than the conventional \emph{ad hoc} incremental algorithm, but it is less complex, requires fewer lines of code, is fully composable, and is able to respond seamlessly to multiple data changes, relieving the programmer from the task of implementing explicitly change propagation. We also tested {\tt sp} with different types of schedulers. By customizing the {\tt pick} function of the default priority queue scheduler (giving highest priority to nodes closest to the source), a noticeable performance improvement has been achieved (see Figure~\ref{fi:sssp_comparison}). We also tried a simple stack scheduler, which, however, incurred a slowdown of a factor of 4 over the default scheduler.

%--------------------------------------------------------------------------------
\subsection{Comparison to \qt's Signal-slot Mechanism} 

Maintaining relations between widgets in a graphic user interface is one of the most classical applications of dataflow constraints~\cite{Zanden01}. We assess the performance of \dc\ in event-intensive interactive applications by comparing the \dc\ implementations of {\tt buttongrid} and {\tt othello} with the conventional versions built atop \qt's signal-slot mechanism.

In {\tt buttongrid}, each constraint computes the size and position of a button in terms of the size and position of adjacent buttons. We considered user interaction sessions with continuous resizing, which induce intensive scheduling activity along several propagation chains in the acyclic dataflow graph. In {\tt othello}, constraints are attached to cells of the game board (stored in reactive memory) and maintain a mapping between the board and its graphical representation: in this way, the game logic can be completely unaware of the GUI backend, as prescribed by the observer pattern (see Section~\ref{ss:observer-pattern}). For both benchmarks, we experimented with different grid/board sizes. Figure~\ref{fi:interactive} plots the average time per resize event ({\tt buttongrid}) and per game move ({\tt othello}), measured over 3 independent runs. Both the total time and the change propagation time are reported. For all values of $n$, the performance differences of the \dc\ and \qt\ conventional implementations are negligible and the curves are almost overlapped. Furthermore, the time spent in change propagation is only a small fraction of the total time, showing that the overhead introduced by access violations handling, instrumentation, and scheduling in \dc\  can be largely amortized over the general cost of widget management and event propagation in \qt\ and in its underlying layers.

\begin{figure*}
{\centerline{
\begin{tabular}{cccc}
\raisebox{23mm}{(a)} & \includegraphics[width=6.8cm]{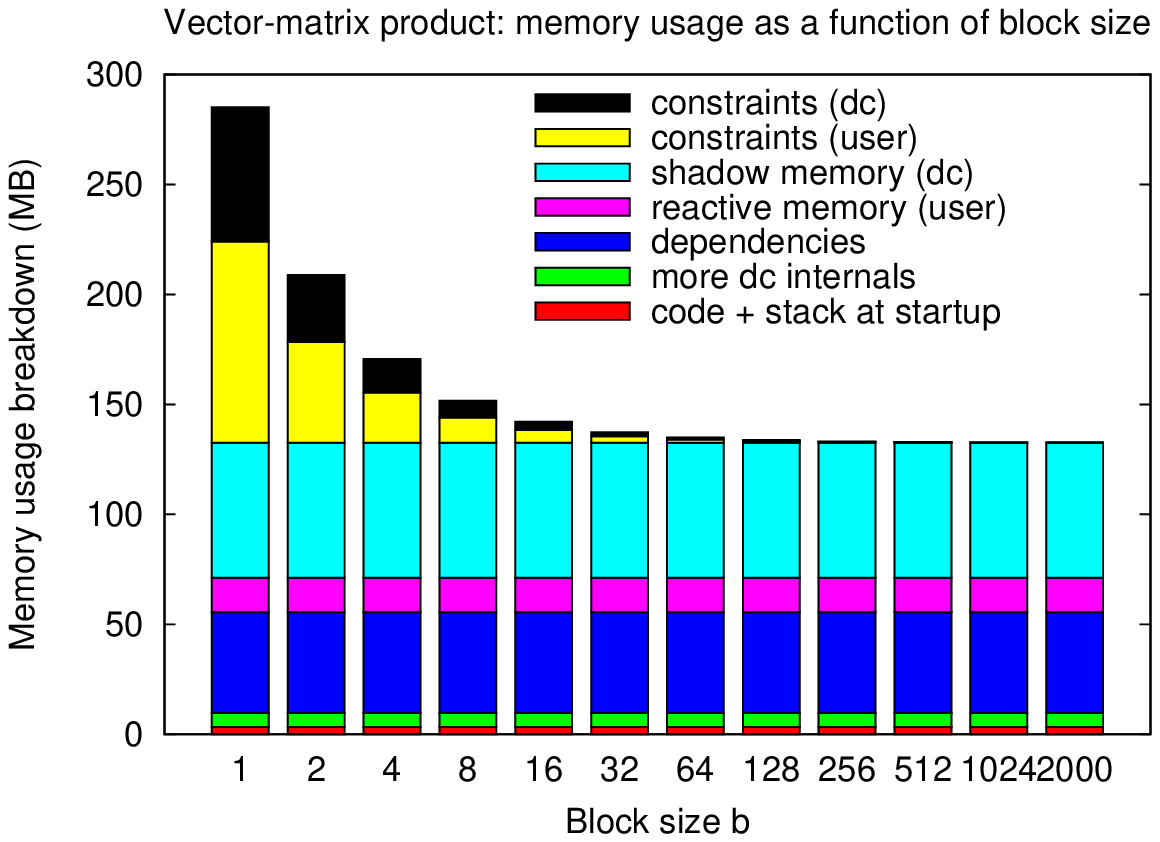}\hspace{10mm} &
\raisebox{23mm}{(b)} &\includegraphics[width=6.8cm]{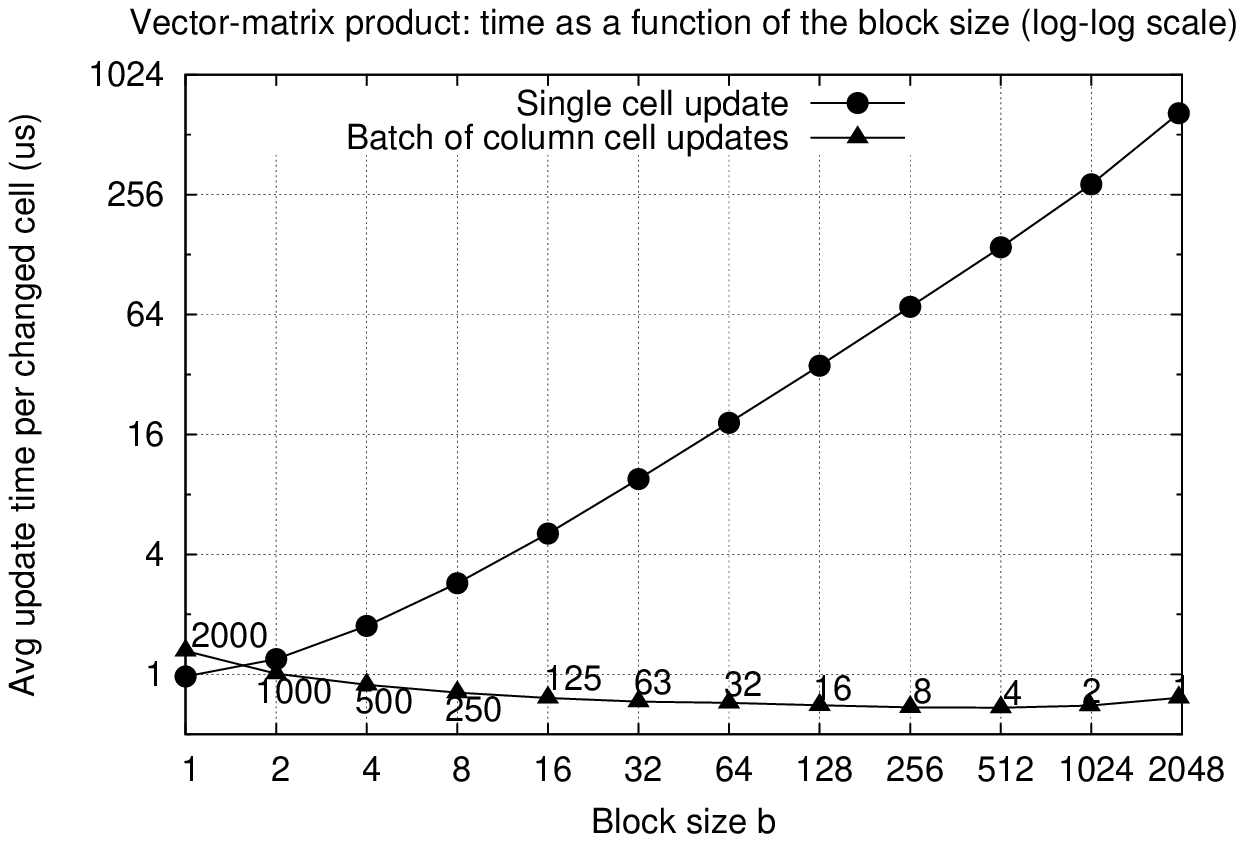}\\
\end{tabular}
}}
\caption{Analysis of {\tt vecmat} as a function of the block size: (a) memory usage; (b) update time for two different kinds of updates. Point labels indicate the number of constraints executed in each batch (for single cells updates, the number of executed constraints is always 1, but a constraint does more work as $b$ increases).}
\label{fi:vecmat}
\end{figure*}

%--------------------------------------------------------------------------------
\subsection{Fine-grained vs. Coarse-grained Decompositions} 

A relevant feature of \dc\ is that designers can flexibly decide at which level of granularity a given algorithmic solution can be decomposed into smaller parts, i.e., they might use a single constraint that performs the entire computation (coarse-grained decomposition), or many constraints each computing only a small portion of the program's state (fine-grained decomposition). In reactive scenarios, where constraints are re-evaluated selectively only on the affected portions of the input, this design choice can have implications both on memory usage and on running time. To explore these tradeoffs, we experimented with matrix benchmarks {\tt matmat} and {\tt vecmat}. For brevity, in this section we focus on {\tt vecmat}, the results for {\tt matmat} being similar.

\begin{figure}
\centerline{
\includegraphics[width=6.3cm]{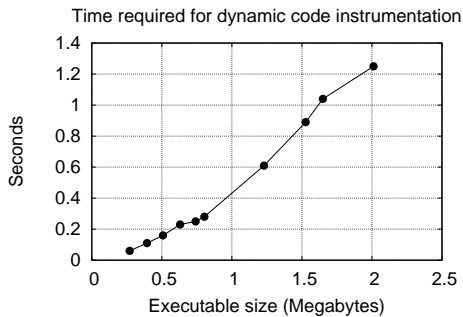}
}
\bigskip
\caption{Time required for dynamic instrumentation.}
\label{fi:instrtimes}
\end{figure}

Let $V$ be a vector of size $n$ and let $M$ be a reactive matrix of size $n\times n$. Our implementation of the vector-matrix product algorithm is {\em blocked}: constraints are associated to blocks of matrix cells, where a block is a set of consecutive cells on the same column. 
If the block size is 1, then there is one constraint per matrix cell: constraint $c_{i,j}$ is responsible of updating the $j$-th entry of the output vector with the product $V[i]*M[i][j]$. This can be done in $O(1)$ time by maintaining a local copy of the old product value and updating the result with the difference between the new value and the old one.
If the block size is $n$, then there is a constraint per matrix column: constraint $c_j$ associated with column $j$ computes the scalar product between $V$ and $M[\cdot][j]$ and updates the $j$-th entry of the output vector with the new value. The approach can be naturally adapted to deal with any block size $b\in[1,n]$.

In Figure~\ref{fi:vecmat} we report the outcome of an experiment with $n=2000$ in which we increased the block size $b$ from 1 to $n$. As shown in Figure~\ref{fi:vecmat}a, the memory usage is inversely proportional to $b$, and thus directly proportional to the number of constraints: the memory used for maintaining constraints is about half of the total  amount when $b=1$, and negligible when $b=n$. All the other components (in particular, reactive memory, shadow memory, and dependencies) do not depend on the specific block size and remain constant. Figure~\ref{fi:vecmat}b shows the effect of $b$ on the change propagation times for two different kinds of updates. For single cells updates, the time scales linearly with $b$ (axes are on a log-log scale): this confirms the intuition that if an update changes only a single cell, implementations using larger block sizes perform a lot of unnecessary work. The scenario is completely different if single updates need to change entire columns (which is a typical operation for instance in incremental graph reachability algorithms~\cite{Demetrescu+2001a}): in this case, change propagation time not only is not penalized by larger block sizes, but it is also slightly improved. This is due to the fact that larger values of $b$ yield a smaller number of constraints, which induces smaller scheduling activity. The improvement, however, is modest, suggesting that \dc's constraint scheduling overhead is modest compared to the overall work required to solve a given problem even at the finest-grained decomposition where there is one constraint per matrix cell.

%--------------------------------------------------------------------------------
\subsection{Instrumentation Overhead} 

As a final set of experiments, we have measured how instrumentation time scales as a function of the executable file size. We noticed that the performance overheads are dominated by the initial static binary code analysis phase performed during \dc's initialization, which scans the code to index memory access instructions as described in Section~\ref{ss:reactive-implem}. The times required for access violation handling and just-in-time binary code patching are negligible compared to the overall execution times in all tested applications and are not reported. The experiment was conducted by initializing \dc\ on executable files obtained by linking statically object files of increasing total size. The results are reported in Figure~\ref{fi:instrtimes} and indicate that \dc\ scales linearly, with total instrumentation times being reasonably small even for large executable sizes.

%====================================================================
\section{Related Work}
\label{se:relatedwork}

The ability of a program to respond to modifications of its environment is a feature that has been widely explored in a large variety of settings and along rather different research lines. While this section is far from being exhaustive, we discuss some previous works that appear to be more closely related to ours.

\paragraph{GUI and Animation Toolkits.} Although dataflow programming is a general paradigm, dataflow constraints have gained popularity in the 90's especially in the creation of interactive user interfaces. Amulet~\cite{Amulet97} and its predecessor Garnet~\cite{myers90} are graphic user interface toolkits based on the dataflow paradigm. Amulet integrates a constraint solver with a prototype-instance object model implemented on top of C++, and is closely related to our work. Each object, created by making an instance of a prototype object, consists of a set of properties (e.g., appearance or position) that are stored in reactive variables, called slots. Constraints are created by assigning formulas to slots. Values of slots are accessed through a {\tt Get} method that, when invoked from inside of a formula, sets up a dependency between slots. A variety of approaches have been tested by the developers to solve constraints~\cite{Zanden01}. FRAN (Functional Reactive Animation) provides a reactive environment for composing multimedia animations through temporal modeling~\cite{WanH00}: graphical objects in FRAN use time-varying, reactive variables to automatically change their properties, achieving an animation that is function of both events and time.

\paragraph{Reactive Languages.} The dataflow model of computation can be also supported directly by programming languages. Most of them are visual languages, often used in industrial settings~\cite{Blume07}, and allow the programmer to directly manage the dataflow graph by visually putting links between the various entities. Only a few non-visual languages provide a dataflow environment, mostly for specific domains. Among them, Signal \cite{signal86} and Lustre \cite{lustre87} are dedicated to programming real-time systems found in embedded software, and SystemC~\cite{systemcbook02} is a system-level specification and design language based on C++. The data-driven Alpha language provided by the Leonardo software visualization system allows programmers to specify declarative mappings between the state of a C program an a graphical representation of its data structures~\cite{DFP00}.
Recently, Meyerovich \emph{et al.}~\cite{1640091} have introduced Flapjax, a reactive extension to the JavaScript language targeted to Web applications. Flapjax offers \emph{behaviors} (e.g., variables whose value changes are automatically propagated by the language), and \emph{event streams} (e.g., potentially infinite streams of discrete events, each of which triggers additional computations). SugarCubes~\cite{SPE:SPE218} and ReactiveML~\cite{MandelPouzet-PPDP-2005} allow reactive programming (in Java and OCAML, respectively) by relying not on operating system and runtime support, as our approach does, but rather on causality analysis and a custom interpreter/compiler. These systems, however, track dependencies between functional units, through the use of specific language constructs, such as \emph{events}, and explicit commands for generating and waiting for events. 

\paragraph{Constraint Satisfaction.} Dataflow constraints fit within the more general field of constraint programming~\cite{apt03principles}. Terms such as ``constraint propagation'' and ``constraint solving'' have often been used in papers related to dataflow since the early developments of the area~\cite{B81, Amulet97, Zanden01}. However, the techniques developed so far in dataflow programming are quite distant from those appearing in the constraint programming literature~\cite{bessiere06}. In constraint programming, relations between variables can be stated in the form of multi-way constraints, typically specified over restricted domains such as real numbers, integers, or Booleans. Domain-specific solvers use knowledge of the domain in order to forbid explicitly values or combinations of values for some variables~\cite{bessiere06}, while dataflow constraint solvers are domain-independent. Moving from early work on attribute grammars~\cite{knuth68, DemersRT81}, a variety of incremental algorithms for performing efficient dataflow constraint satisfaction have been proposed in the literature and integrated in dataflow systems such as Amulet. These algorithms are based either on a mark-sweep approach~\cite{DemersRT81,Hudson91}, or on a topological ordering~\cite{AlpernHRSZ90,Hoover87}. In contrast, \dc\ uses a priority-based approach, which allows users to customize the constraint scheduling order. Mark-sweep algorithms are preferable when the dataflow graph can change dynamically during constraint evaluation: this may happen if constraints use indirection and conditionals, and thus cannot be statically analyzed.  With both approaches, if there are cyclic dependencies between constraints, they are arbitrarily broken, paying attention to evaluate each constraint in a cycle at most once. Compared to our iterative approach, this limits the expressive power of constraints.

\paragraph{Self-adjusting Computation.} A final related area, that we have extensively discussed throughout the paper, is that of self-adjusting computation, in which programs respond to input changes by updating automatically their output. This is achieved by recording data and control dependencies during the execution of programs so that a change propagation algorithm can update the computation as if the program were run from scratch, but executing only those parts of the computation affected by changes. We refer to~\cite{AcarBBT06,HammerAC09,DBLP:conf/pldi/AcarBLTT10} for recent progress in this field.

%====================================================================
\section{Future Work}
\label{se:conclusions}

The work presented in this paper paves the road to several further developments. Although conventional platforms offer limited support for implementing reactive memory efficiently, we believe that our approach can greatly benefit from advances in the hot field of transactional memories, which shares with us the same fundamental need for a fine-grained, highly-efficient control over memory accesses. Multi-core platforms suggest another interesting direction. Indeed, exposing parallelism was one of the motivations for  dataflow architectures, since the early developments of the area. We regard it as a challenging goal to design effective models and efficient implementations of one-way dataflow constraints in multi-core environments.

%--------------------------
\acks

We wish to thank Umut Acar and Matthew Hammer for many enlightening discussions and for their support with \ceal. We are also indebted to Alessandro Macchioni for his contributions to the implementation of reactive memory, and to Pietro Cenciarelli and Ivano Salvo for providing useful feedback on the formal aspects of our work. 

This work was supported in part by the Italian Ministry of Education, University, and Research (MIUR) under PRIN 2008TFBWL4 national research project ``AlgoDEEP: Algorithmic challenges for data-intensive processing on emerging computing platforms''.

\bibliographystyle{abbrvnat}
\softraggedright

\end{document}